\algnewcommand\algorithmicinput{\textbf{Input:}}
\algnewcommand\INPUT{\item[\algorithmicinput]}
\algnewcommand\algorithmicoutput{\textbf{Output:}}
\algnewcommand\OUTPUT{\item[\algorithmicoutput]}
\theoremstyle{definition}
\newtheorem{definition}{Definition}
\newtheorem{proposition}{Proposition}
\newtheorem{lemma}{Lemma}
\newtheorem{theorem}{Theorem}
\newtheorem{corollary}{Corollary}
\newtheorem{remark}{Remark}
\newcommand{\cC}{\mathcal{C}}
\newcommand{\remove}[1]{}
\newcommand{\ceilenv}[1]{\left\lceil #1 \right\rceil}
\newcommand{\floorenv}[1]{\left\lfloor #1 \right\rfloor}
\newcommand\nc\newcommand
\nc{\vzero}{{\boldsymbol{0}}}
\nc{\bfc}{{\boldsymbol c}}
\nc{\bfr}{{\boldsymbol r}}
\nc\bfa{{\boldsymbol a}}\nc\bfA{{\boldsymbol A}}\nc\cA{{\mathcal A}}
\nc\bfb{{\boldsymbol b}}\nc\bfB{{\boldsymbol B}}\nc\cB{{\mathcal B}}
\nc\bfd{{\boldsymbol d}}\nc\bfD{{\boldsymbol D}}\nc\cD{{\mathcal D}}
\nc\bfe{{\boldsymbol e}}\nc\bfE{{\boldsymbol E}}\nc\cE{{\mathcal E}}
\nc\bff{{\boldsymbol f}}\nc\bfF{{\boldsymbol F}}\nc\cF{{\mathcal F}}
\nc\bfg{{\boldsymbol g}}\nc\bfG{{\boldsymbol G}}\nc\cG{{\mathcal G}}
\nc\bfh{{\boldsymbol h}}\nc\bfH{{\boldsymbol H}}\nc\cH{{\mathcal H}}
\nc\bfi{{\boldsymbol i}}\nc\bfI{{\boldsymbol I}}\nc\cI{{\mathcal I}}
\nc\bfj{{\boldsymbol j}}\nc\bfJ{{\boldsymbol J}}\nc\cJ{{\mathcal J}}
\nc\bfk{{\boldsymbol k}}\nc\bfK{{\boldsymbol K}}\nc\cK{{\mathcal K}}
\nc\bfl{{\boldsymbol l}}\nc\bfL{{\boldsymbol L}}\nc\cL{{\mathcal L}}
\nc\bfm{{\boldsymbol m}}\nc\bfM{{\boldsymbol M}}\nc\cM{{\mathcal M}}
\nc\bfn{{\boldsymbol n}}\nc\bfN{{\boldsymbol N}}\nc\cN{{\mathcal N}}
\nc\bfo{{\boldsymbol o}}\nc\bfO{{\boldsymbol O}}\nc\cO{{\mathcal O}}
\nc\bfp{{\boldsymbol p}}\nc\bfP{{\boldsymbol P}}\nc\cP{{\mathcal P}}
\nc\bfq{{\boldsymbol q}}\nc\bfQ{{\boldsymbol Q}}\nc\cQ{{\mathcal Q}}
\nc\bfs{{\boldsymbol s}}\nc\bfS{{\boldsymbol S}}\nc\cS{{\mathcal S}}
\nc\bft{{\boldsymbol t}}\nc\bfT{{\boldsymbol T}}\nc\cT{{\mathcal T}}
\nc\bfu{{\boldsymbol u}}\nc\bfU{{\boldsymbol U}}\nc\cU{{\mathcal U}}
\nc\bfv{{\boldsymbol v}}\nc\bfV{{\boldsymbol V}}\nc\cV{{\mathcal V}}
\nc\bfw{{\boldsymbol w}}\nc\bfW{{\boldsymbol W}}\nc\cW{{\mathcal W}}
\nc\bfx{{\boldsymbol x}}\nc\bfX{{\boldsymbol X}}\nc\cX{{\mathcal X}}
\nc\bfy{{\boldsymbol y}}\nc\bfY{{\boldsymbol Y}}\nc\cY{{\mathcal Y}}
\nc\bfz{{\boldsymbol z}}\nc\bfZ{{\boldsymbol Z}}\nc\cZ{{\mathcal Z}}
\nc{\entropy}{{\sf H}}
\nc{\maxcode}{{\sf M}}
\nc{\maxrate}{\mu}
\newcommand{\hm}[1]{{\color{red}(HM: #1)}}
\newcommand{\RR}{\mathbb{R}}
\nc{\dist}{{\rm d}}
\nc{\orc}{{\sf OR}}
\nc{\xor}{{\sf XOR}}
\nc{\add}{{\sf ADD}}
\nc{\bflamb}{\boldsymbol{\lambda}}
\mathchardef\mhyphen="2D % Define a "math hyphen"
\newcommand{\lambadd}{{\lambda\mhyphen\add}}
\newcommand{\norm}[1]{\left\lVert#1\right\rVert}
\newcommand\myshade{70} 
\title{Noise-Tolerant Codebooks for Semi-Quantitative Group Testing:  Application to Spatial Genomics}
\author{
\IEEEauthorblockN{
  Kok Hao Chen\IEEEauthorrefmark{1},
  Duc Tu Dao\IEEEauthorrefmark{1}\IEEEauthorrefmark{2}, 
  Han Mao Kiah\IEEEauthorrefmark{2},
  Van Long Phuoc Pham\IEEEauthorrefmark{2}, 
  Eitan Yaakobi\IEEEauthorrefmark{2}\IEEEauthorrefmark{3}}
\IEEEauthorblockA{\small 
 \IEEEauthorrefmark{1}Genome Institute of Singapore, Agency for Science, Technology and Research (A*STAR)
    }
 \IEEEauthorblockA{\small 
 \IEEEauthorrefmark{2}School of Physical and Mathematical Sciences, Nanyang Technological University, Singapore
    }
 \IEEEauthorblockA{\small \IEEEauthorrefmark{3}Department of Computer Science, Technion --- Israel Institute of Technology, Israel
    }    
  {\footnotesize chenkh@gis.a-star.edu.sg,~daodt@gis.a-star.edu.sg,~hmkiah@ntu.edu.sg,~phuoc.phamvanlong@gmail.com,~yaakobi@cs.technion.ac.il}
  }
\date{}
\begin{document}

\maketitle
% \input{OR_or_XOR} \\ submitted to ISIT 2024
%\input{version_27_1(tu)}

% Highlight amendments in blue
\hspace*{-3mm}\begin{abstract}
Motivated by applications in spatial genomics, we revisit group testing (Dorfman~1943)
and propose the class of $\lambda$-{\sf ADD}-codes, studying such codes with certain distance $d$ and codelength $n$.
When $d$ is constant, we provide explicit code constructions with rates close to $1/2$.
%rates $1-1/s$, where $s$ is the number of defective items.
When $d$ is proportional to $n$, we provide a GV-type lower bound whose rates are efficiently computable.
Upper bounds for such codes are also studied.
\end{abstract}

\section{Introduction}

Consider $M$ items, of which at most $s$ of them are {\em defectives}.
The objective of group testing is to devise a set of $n$ {\em tests} that can identify this subset of at most $s$ defectives. 
These tests are represented by a binary $(n\times M)$-{\em test matrix} $\bfM$, 
whose rows
%\footnote{Here, we deviate slightly from the usual test-matrix where the rows and columns are indexed by tests and items, respectively. This is to place emphasis that we are regarding the matrix as a codebook.\hm{On second thoughts, maybe we should transpose the test matrix.}}
and columns are indexed by the $n$ tests and $M$ items, respectively.
Then in the $i$-th test, we look at the $i$-th row of $\bfM$ and 
include in the test the subset $S$ of items whose corresponding entry is one.

A measurement can then be conceptualized as the application of a function $f$ to the subset $S$. 
In the original group testing context, as proposed by Dorfman~\cite{dorfman1943detection}, $f$ takes the value of one if there is at least one defective in $S$. In other words, $f(S)$ represents the Boolean sum of the entries in $S$. 
Later, Csisz{\'a}r and K{\"o}rner introduced the {\em memoryless multiple-access channel} or {\em adder channel}, where $f(S)$ counts the exact number of defectives in $S$ \cite{csiszar2011information}. 
That is, $f(S)$ is simply the integer-valued sum of entries in $S$. 
%\hm{
This model, also known as {\em quantitative group testing}, traces its origins back to the coin-weighing problem \cite{shapiro1960e1399, erdos1963two}.%} 
These models have been widely applied in both theory and practice, and comprehensive surveys can be found in the instructive works of Du and Hwang~\cite{du2000combinatorial} and D'yachkov~\cite{dyachkov2014lectures} (see also, Tables~1,~2 and~3 in~\cite{guruswami2023noise}). 
%\hm{
A recent text with a detailed summary of group testing applications can be found here~\cite{aldridge2019group}.
%}

In this paper, we treat each {\em column} of the test-matrix $\bfM$ as a codeword and refer to the collection of these columns as a {\em code}.
Specifically, with respect to the Boolean sum and the real-valued sum, we refer to these codes as $\orc$-codes and $\add$-codes, respectively (see formal definitions in Section~\ref{sec:prob}).

Later, motivated by applications in genotyping and biosensing, Emad and Milenkovic proposed a novel framework called {\em semi-quantitative group testing (SQGT)}~\cite{emad2014sqgt}.
In SQGT, the measurement $f(S)$ takes on real values, depending on both the number of defectives in $S$ and a specified set of thresholds (a quantizer).
Additionally, the framework allows for the measurement of varying amounts of items, resulting in a non-binary test matrix.
In the same paper, the authors then presented several test matrices or code constructions that correctly perform SQGT in the presence of errors.

{%\color{blue}
In this paper, we continue this line of investigation, drawing inspiration from yet another biology application -- spatial genomics. Here, we broadly explain the differences.
In previous biosensing applications~\cite{Sheikh2007,shental2010identification}, the process of inferring the genetic information of a target organism typically involved a fixed microarray of DNA probes. 
The target DNA sample would be fluorescently tagged and then flushed over the microarray. Through the fluorescence values of the microarray, we identify regions where binding occurred, thus inferring the genetic makeup of the target organism. 
However, the procedure loses spatial information of the genetic material on the organism.
In contrast, spatial genomics involves a reversal of roles between the DNA probes and the target genetic material~\cite{goh2020highly}.
In this case, the probes are fluorescently tagged and flushed over the target tissue.
Again, heightened fluorescence values indicate binding, but in this case, we can not only infer the genetic information but also pinpoint the spatial positions of these genetic elements on the target tissue. 

Suppose that there are $M$ possible gene sequences of interest.
One can naively synthesize $M$ different probes and use $M$ testing rounds (of flushing and scanning) to identify each individual sequence.
However, we can adapt an approach {\em \'a la} group testing, or specifically SQGT, and design probes to measure multiple sequences at one time. 
This then significantly reduces the number of testing rounds. Unlike SQGT, our context does not afford the flexibility to vary the amount of tissue.
Consequently, we limit our attention to the design of {\em binary} codewords. 
}
%Given the genetic sample
%\hm{(Describe application). 
%- we limit our codewords to binary vectors
%- we also simplify the measurement function $f$ by letting $f(S)$ assume only $s+1$ values. Then $f(S)$ returns a value that is dependent on the number of defectives in $S$ (see Definition~\ref{}).}

Now, this connection to the SQGT framework is only superficial, and we outline some crucial differences here.
First, we measure errors differently. 
Suppose that $\bff\in\RR^n$ represents the correct measurements from the $n$ tests, and let $\widetilde{\bff}$ be an erroneous measurement. In~\cite{emad2014sqgt}, errors are quantified by the {\em number} of coordinates where $\bff$ and $\widetilde{\bff}$ differ. 
In contrast, our approach considers the $\ell_1$-norm of $\bff-\widetilde{\bff}$, and hence, our objective is to correctly identify the defectives whenever this magnitude is small.

Next, we observe that in~\cite{emad2014sqgt}, all explicit code constructions designed to correct $e>0$ errors utilize $\orc$-codes as their building components, inevitably resulting in a rate loss. 
In contrast, our approach employs a different class of binary codes known as $\xor$-codes (refer to Section~\ref{sec:xor-def}), enabling us to achieve higher rates. For example, when $s=2$ and in the regime where the distance is a constant with respect to the codelength, our approach attains rates of $0.5$, 
surpassing best known construction of $\orc$-codes with rates $0.302$ (see also, Figure~\ref{fig:GVplot}).

In the regime where the number of errors grows linearly with the codelength, the rates achievable by the analysis in~\cite{emad2014sqgt} are unclear. 
Here, we perform a different analysis, using the probabilistic method on hypergraphs and analytic combinatorics in several variables (ACSV), to derive computable lower bounds for achievable rates. 
%surpassing the upper bound of $0.322$ for $\orc$-codes.
%(a more general statement is provided in Theorem~\ref{thm:or}).

In summary, we revisit the SQGT framework in the context of the $\ell_1$-distance and 
terming these codes {\em $\lambadd$-codes}. 
In this work, our focus lies in constructing $\lambadd$-codes with distance $d>1$.
First, when $d$ is a constant with respect to the  codelength, using $\xor$-codes, we construct explicit families of $\lambadd$-codes that achieve higher rates than any known construction. 
Second, when $d$ grows linearly with the codelength, we provide a GV-type lower bound where the achievable rates can be efficiently computed.

%\hm{Paragraph on Contributions}

\section{Problem Statement}
\label{sec:prob}

For a positive integer $n$, we use $[n]$ and $\{0,1\}^n$ to denote the set $\{1,2,\ldots, n\}$ and the set of length-$n$ binary strings, respectively.
For some finite set $\cX$ and integer $s\le |\cX|$, we use $\binom{\cX}{\le s}$ to denote the collection of all {\em nonempty} subsets of $\cX$ with size at most $s$.

As mentioned earlier, the integers $n$, $M$, and $s$ denote the number of tests, items, and the maximum number of defectives, respectively. Furthermore, we introduce a function $\bff$ whose domain is $\binom{\{0,1\}^n}{\le s}$ and codomain is $\RR^n$. The function $\bff$ depends on the size of the input subset and reflects the measurement process in the experiment. Given a code $\cC\subseteq\{0,1\}^n$, its {\em $\bff$-norm} is defined to be $\dist_\bff(\cC;s)\triangleq\min\{\norm{\bff(S_1)-\bff(S_2)} : S_1,S_2\in \binom{\cC}{\le s},~S_1\ne S_2 \}$, %\ey{should it be $\{0,1\}^n$ instead of $\cX$}, 
where $\norm{\cdot}$ is the $\ell_1$-distance. In this paper, we fix $n$, $d$, $s$, and $\bff$, and our task is to find a code $\cC\subseteq\{0,1\}^n$, such that $\dist_\bff(\cC;s)$ is at least $d$. Then classical group testing requires one to find a code $\cC$ with $\dist_\bff(\cC;s)>0$.

%\hm{include $s$ in notation.}

As always, we are interested in maximizing the code size. That is, we are interested in determining
\begin{equation}\label{eq:Afnd}
\maxcode_\bff(n,d;s) \triangleq \max\{|\cC|\,:\, \cC\subseteq\{0,1\}^n,~\dist_\bff(\cC;s)\ge d\}\,.
\end{equation}
Two regimes of interest are as follows. 
\begin{itemize}
\item When $d$ is constant with respect to $n$, we estimate the optimal {\em redundancy} $\rho_\bff(n,d;s)\triangleq n-\log \maxcode_\bff(n,d;s)$. Here and in the rest of the paper, the logarithm is taken base two.
\item For fixed $\delta>0$, we estimate the optimal {\em asymptotic rate} $\maxrate_\bff(\delta;s) \triangleq \limsup_{n\to\infty} \frac1n \log \maxcode_\bff(n,\floorenv{\delta n};s)$.
In the case for $\delta=0$, we estimate 
$\maxrate_\bff(s) \triangleq \limsup_{n\to\infty} \frac1n \log \maxcode_\bff(n,1;s)$.
%$\maxrate_\bff(s) \triangleq \limsup_{\delta\to 0} \maxrate\bff(\delta;s)$ \ey{Can we define $\maxrate\bff(s) \triangleq \limsup_{n\to\infty} \frac1n \log \maxcode\bff(n,1;s)$}.
\end{itemize}

%\hm{
In what follows, we provide a short summary of known codebook constructions in these asymptotic regimes. Notable works on group testing in the presence of noise can be found in~\cite{cheraghchi2009noise, bshouty2012, goshkoder2024efficient}. 
%}

\subsection{$\orc$-Codes and $\xor$-Codes}
\label{sec:xor-def}

In the original group testing setup defined by Dorfman~\cite{dorfman1943detection}, the function $\bff$ corresponds to the Boolean sum. Specifically, we define the $\orc$ function such that $\orc(S) \triangleq \bigvee_{\bfx\in S} \bfx$ for $S\in\binom{\{0,1\}^n}{\le s}$. 
The corresponding rates $\maxrate_\orc(s)$ have been extensively studied.
Here, we state its best-known estimates and we refer the interested reader to~\cite{du2000combinatorial, dyachkov2014lectures} for more details.

\begin{theorem}[{see \cite[Chapter 7]{du2000combinatorial} or \cite[Section 2]{dyachkov2014lectures}}]
\label{thm:or}
There are computable constants $L_s$ and $U_s$ 
such that $L_s\le \maxrate_{\orc}(s)\le U_s$ for $s\ge 2$. Here, we list the first few values of $L_s$ and $U_s$. 
\begin{center}
\begin{tabular}{|c||c|c|c|c|c|}
\hline
$s$    & 2 & 3 & 4 & 5 & 6 \\ \hline
$U_s$  & 0.500 & 0.333 & 0.250 & 0.200 & 0.167\\
$L_s$  & 0.302 & 0.142 & 0.082 & 0.053 & 0.037 \\ \hline
\end{tabular}
%L_s = \underline{R}_s
%U_s is just 1/s
\end{center}
\end{theorem}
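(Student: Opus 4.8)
Since Theorem~\ref{thm:or} is quoted from the literature, I would reconstruct its two halves separately, using the observation that $\dist_\orc(\cC;s)>0$ holds exactly when the Boolean sums $\orc(S)$, over nonempty $S\subseteq\cC$ with $|S|\le s$, are pairwise distinct (i.e.\ $\cC$ is $s$-separable). For the upper bound $\maxrate_\orc(s)\le U_s$ a counting argument suffices: if $\dist_\orc(\cC;s)>0$ then $S\mapsto\orc(S)$ is injective on $\binom{\cC}{\le s}$, so $\binom{|\cC|}{s}\le\sum_{i=1}^{s}\binom{|\cC|}{i}\le 2^n$, and using $\binom{|\cC|}{s}\ge(|\cC|-s)^s/s!$ and taking base-two logarithms gives $\tfrac1n\log|\cC|\le\tfrac1s+\tfrac{\log s!}{sn}\to\tfrac1s$. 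Hence $U_s=1/s$ is admissible, which matches the tabulated values; sharper finite-$s$ upper bounds (e.g.\ via the D'yachkov--Rykov recurrence) exist but are not needed here.

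For the lower bound $\maxrate_\orc(s)\ge L_s$ I would use random coding with expurgation, keeping the Bernoulli parameter $q\in(0,1)$ as the quantity to optimize. Draw $N$ codewords $\bfc_1,\dots,\bfc_N\in\{0,1\}^n$ with all coordinates i.i.d.\ $\mathrm{Bernoulli}(q)$. A failure of $s$-separability is a pair $S_1\ne S_2$ in $\binom{[N]}{\le s}$ with $\orc(S_1)=\orc(S_2)$; writing $C=S_1\cap S_2$, $A=S_1\setminus S_2$, $B=S_2\setminus S_1$ and $(a,b,c)=(|A|,|B|,|C|)$, this is an event determined coordinatewise, independent across the $n$ coordinates, of probability $\beta(q;a,b,c)^n$ with the explicit closed form $\beta(q;a,b,c)=1-(1-q)^{a+c}-(1-q)^{b+c}+2(1-q)^{a+b+c}<1$. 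Nonemptiness of $S_1,S_2$ together with $S_1\ne S_2$ forces $a+b+c\ge 2$, so over the $O(s^3)$ admissible shapes a union bound gives expected number of failing configurations at most $\sum_{(a,b,c)}N^{a+b+c}\beta(q;a,b,c)^n=o(N)$ as soon as $N=2^{Rn}$ with $R<\gamma(q):=\min_{(a,b,c)}\frac{-\log\beta(q;a,b,c)}{a+b+c-1}$. Since each failing configuration involves at most $2s$ codewords, Markov's inequality and deletion then yield an $s$-separable code of size $\ge N/2$, so $L_s=\max_{q\in(0,1)}\gamma(q)$ is an admissible and manifestly computable lower bound.

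The delicate step is the lower bound, in two respects. First, one must enumerate the \emph{minimal} failing shapes correctly: the condition is genuinely richer than the cover-free/disjunct one used for explicit constructions, because $|S_1|$ and $|S_2|$ may differ and one must first quotient by the common part $C$ before computing a per-coordinate probability. Second, the crude union bound above already gives a positive computable constant but not the sharpest one; matching the $L_s$ in the table requires replacing it by a finer argument --- typically random \emph{constant-weight} codes together with a second-moment / graph-removal style expurgation --- which tightens $\beta$, and hence $\gamma(q)$. The upper bound, by contrast, is immediate from counting.
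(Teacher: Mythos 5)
Theorem~\ref{thm:or} is cited from Du--Hwang and D'yachkov's lecture notes; the paper does not supply a proof, so there is no in-paper argument to compare against. Your identification of $\dist_\orc(\cC;s)>0$ with $\bar{s}$-separability (injectivity of $S\mapsto\orc(S)$ on $\binom{\cC}{\le s}$) is the right starting point, and your upper-bound argument is correct: injectivity forces $\binom{M}{s}\le\sum_{i=1}^{s}\binom{M}{i}\le 2^n$, whence $\maxrate_\orc(s)\le 1/s$, which is exactly the tabulated $U_s$.

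For the lower bound your scheme --- i.i.d.\ Bernoulli$(q)$ codewords, classification of collision events by the shape $(a,b,c)=\bigl(|S_1\setminus S_2|,|S_2\setminus S_1|,|S_1\cap S_2|\bigr)$, the per-coordinate collision probability $\beta(q;a,b,c)=1-(1-q)^{a+c}-(1-q)^{b+c}+2(1-q)^{a+b+c}$, union bound over the finitely many shapes, then expurgation --- is sound and does yield a positive, computable constant. But, as you flag yourself, optimizing $\gamma(q)=\min_{(a,b,c)}\frac{-\log\beta(q;a,b,c)}{a+b+c-1}$ over $q$ falls visibly short of the tabulated $L_s$: for $s=2$ the binding shape is $(1,1,1)$ and the optimum sits near a quarter, not $0.302$. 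The cited sources obtain the sharper numbers via D'yachkov--Rykov random coding over \emph{constant-weight} ensembles, where the collision exponent is computed by a type/entropy calculation and the weight parameter is optimized jointly with the rate. So your proposal is a faithful, if coarser, reconstruction of the lower-bound argument, and you correctly locate where the slack is.
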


\begin{comment}
For $\orc$-codes with distance greater than one, we have the following construction using disjunctive properties of constant weight codes.
Here, $\cC\subseteq\{0,1\}^n$ is an $(n,d,w)$-{\em constant weight code} if $\cC$ has minimum Hamming distance $d$ and all words in $\cC$ has weight exactly $w$.

\begin{theorem}[\hspace*{-1mm}{\cite[Cor.~8.3.3]{du2000combinatorial}}]
If $\cC$ is an $(n,d,w)$-constant weight code,
then $\dist_\orc(\cC)$ is at least $w-s(w-d/2)=sd/2-(s-1)w$.
\end{theorem}
\hm{results for old texts}    
\end{comment}

To construct $\orc$-codes with distance strictly greater than one, we can use certain intersection properties of combinatorial packings (see \cite[Cor.~8.3.3]{du2000combinatorial}).
The following proposition is an asymptotic version of a result from Erd\"os, Frankel and F\"uredi~\cite{erdos1985families} (see also~\cite[Thm.~7.3.8]{du2000combinatorial}). 
%For completeness, we provide a detailed derivation in~\cite{ARXIV}.
For completeness, we provide a detailed derivation in~Appendix~\ref{app:lower-or}. %\hm{To write it.}

\begin{proposition}\label{prop:lower-or}
For $\delta\ge 0$, we have that 
$\maxrate_\orc(\delta;s)\ge 
\max_{\kappa\ge \delta} H\left(\frac1s(\kappa-\delta)\right) 
- 2\kappa H\left(\frac1{\kappa s}(\kappa-\delta)\right)$\,.  
\end{proposition}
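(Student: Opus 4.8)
The plan is to build a large constant-weight binary code whose supports form a combinatorial packing with a controlled pairwise-intersection bound, to observe that any such packing is automatically an $\orc$-code of distance at least $d$, and then to estimate the packing's size by a Gilbert--Varshamov / Erd\H{o}s--Frankl--F\"uredi greedy argument, optimizing the common weight at the end.

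The first step is a clean sufficient condition. Identify each codeword with its support. I claim that if, for every $c\in\cC$ and every $T\subseteq\cC\setminus\{c\}$ with $1\le|T|\le s$, one has $\bigl|c\setminus\bigcup_{c'\in T}c'\bigr|\ge d$, then $\dist_\orc(\cC;s)\ge d$. Indeed, given distinct $S_1,S_2\in\binom{\cC}{\le s}$, pick $c$ in the symmetric difference $S_1\triangle S_2$; by symmetry assume $c\in S_1\setminus S_2$. Since $c\in S_1$ we have $\supp(c)\subseteq\supp(\orc(S_1))$, and because $\orc(S_1),\orc(S_2)$ are $0/1$ vectors, $\norm{\orc(S_1)-\orc(S_2)}=|\supp(\orc(S_1))\triangle\supp(\orc(S_2))|\ge|\supp(c)\setminus\supp(\orc(S_2))|=\bigl|c\setminus\bigcup_{c'\in S_2}c'\bigr|\ge d$, using $|S_2|\le s$ and $c\notin S_2$. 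Moreover, if all codewords have weight $w$ and any two meet in at most $(w-d)/s$ coordinates, then $\bigl|c\cap\bigcup_{c'\in T}c'\bigr|\le|T|\cdot\tfrac{w-d}{s}\le w-d$ and the displayed condition holds; so it is enough to produce a constant-weight-$w$ packing with pairwise intersections at most $\lfloor(w-d)/s\rfloor$, which is where I would invoke the Erd\H{o}s--Frankl--F\"uredi-type existence bound.

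For the counting, I would fix the weight so that $sw+d=\kappa n$, i.e.\ $w=\tfrac1s(\kappa-\delta)n$; then $H\bigl(\tfrac1s(\kappa-\delta)\bigr)=\tfrac1n\log\binom{n}{w}$ is the count of all candidate weight-$w$ words. Build the packing greedily: repeatedly add a word that still meets every already-chosen word in at most $\lfloor(w-d)/s\rfloor$ coordinates (equivalently, whose $\orc$-images stay $\ell_1$-far by $d$). A candidate word is ruled out only through a single already-chosen word that, together with an $s$-subset of the code and the $d$-slack, witnesses a potential cover-failure; the number of weight-$w$ words excluded this way is, up to subexponential factors, the volume $\binom{\kappa n}{w}^2$ — the two $w$-subsets being the new codeword and the conflict-forcing word, both of which must lie inside the same auxiliary coordinate set of size $\kappa n=sw+d$, while the number of choices of that auxiliary set, the slack $d$, and lower-order intersection patterns are subexponential and absorb into $o(n)$. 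Dividing total by volume, $\log\maxcode_\orc(n,d;s)\ge n\bigl[H(\tfrac1s(\kappa-\delta))-2\kappa H(\tfrac{1}{\kappa s}(\kappa-\delta))\bigr]-o(n)$; normalizing by $n$, letting $n\to\infty$, and maximizing over the admissible $\kappa\ge\delta$ gives the proposition, with $\delta=0$, $d=1$ recovering the bound on $\maxrate_\orc(s)$.

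I expect the main obstacle to be the volume estimate: one must verify that the number of weight-$w$ words forbidden at each greedy step is, to first exponential order, exactly $\binom{\kappa n}{w}^2$ — that is, identify which overlap/covering patterns dominate the relevant sums of products of binomial coefficients and confirm that the entropy exponent collapses to $2\kappa H\bigl(\tfrac{1}{\kappa s}(\kappa-\delta)\bigr)$ — and to track simultaneously the ranges of $w$, of the intersection threshold $\lfloor(w-d)/s\rfloor$, and of $\kappa$ for which every asymptotic binomial estimate and the greedy step remain valid (outside these ranges the right-hand side is vacuous, which is why the maximum is over $\kappa\ge\delta$). A secondary but genuinely delicate point is to arrange the greedy/expurgation step so that no spurious $\mathrm{poly}(N)$ factor enters the denominator, since that is precisely what pins down this particular (and, per the table comparison, not optimal) exponent rather than an $s$-th root of it.
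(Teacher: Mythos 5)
Your first step is sound and is essentially the same sufficient condition the paper uses: a constant-weight code in which every pair of supports meets in at most $\floorenv{(w-d)/s}$ coordinates is an $\orc$-code of distance $\ge d$. In the paper's language this is exactly Proposition~\ref{prop:or-packing}, where $k=w$ is the block size and $t-1=\floorenv{(w-d)/s}$ is the packing threshold, giving $\dist_\orc(\cC)\ge k-s(t-1)=d$. Your derivation of this fact via $\supp(\orc(S_1))\supseteq\supp(c)$ is correct.

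The counting step, however, has a genuine gap, and you correctly anticipated where. You claim the number of weight-$w$ words forbidden by a single already-chosen codeword is, to first exponential order, $\binom{\kappa n}{w}^2$, justified by saying the new word and the conflict-forcing word both live in an auxiliary set of size $\kappa n$ whose ``number of choices\ldots is subexponential.'' That last assertion is false: there are $\binom{n}{\kappa n}\sim 2^{nH(\kappa)}$ such auxiliary sets, which is exponential, and the actual number of weight-$w$ words meeting a fixed weight-$w$ word in more than $(w-d)/s$ coordinates is (to first exponential order) $\binom{w}{\iota n}\binom{n-w}{w-\iota n}$ with $\iota=(w-d)/(sn)$, whose exponent $\tau H(\iota/\tau)+(1-\tau)H\bigl((\tau-\iota)/(1-\tau)\bigr)$ (with $\tau=w/n$) is \emph{not} $2\kappa H(\tau/\kappa)$. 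There is also a parameter mix-up feeding this: with $w=\frac1s(\kappa-\delta)n$ your ``weight'' $w$ actually plays the role of the packing threshold $t-1$ in the target formula (the $\binom{n}{t-1}$ in Theorem~\ref{thm:EFF} counts $(t-1)$-subsets, not codewords), while $\kappa n$ is the block size $k$; so the two binomials $\binom{n}{w}$ and $\binom{\kappa n}{w}^2$ do not admit the ``codewords divided by per-step volume'' reading you give them.

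The paper sidesteps this entirely: it does not run a greedy volume bound at all, but invokes the Erd\H{o}s--Frankl--F\"uredi existence theorem for $t$-$(v,k,1)$-packings (Theorem~\ref{thm:EFF}), which directly yields a packing of size $\binom{v}{t-1}\big/\binom{k}{t-1}^2$, and then substitutes $k=\kappa n$, $t-1=\frac1s(\kappa-\delta)n$, $d=\delta n$. A naive greedy count (with the correct conflict exponent above) actually yields the \emph{stronger} exponent $H(\kappa)-\kappa H(\tau/\kappa)-(1-\kappa)H\bigl((\kappa-\tau)/(1-\kappa)\bigr)$, which one can show dominates $H(\tau)-2\kappa H(\tau/\kappa)$ by convexity of $x\log x$; so if you fix the conflict count you would prove something at least as strong, but you would then still need a separate (convexity) argument to reduce it to the stated form. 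As written, the claimed per-step volume $\binom{\kappa n}{w}^2$ is unjustified and the ``subexponential auxiliary set'' step is incorrect; replacing that portion with a citation to, or a proof of, the EFF packing bound is the missing ingredient.
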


To construct codes for later sections, we introduce the class of $\xor$-codes.
Specifically, here the function $\bff$ corresponds to the $\xor$ sum (or addition modulo two). 
In other words, we define the $\xor$ function such that $\xor(S) \triangleq \bigoplus_{\bfx\in S} \bfx$ for $S\in\binom{\{0,1\}^n}{\le s}$. To the best of our knowledge, this class of codes has been studied before only for $d=1$ (see e.g.~\cite[Section V]{Sloane_Bh},~\cite{OBryant2004}). 
%\ey{do we mean with the extension of $d$, right? Otherwise for $d=1$, isn't it like $B_h$ sequences?}
Unlike $\orc$-codes, we are able to provide asymptotically sharp estimates for the rates of $\xor$ codes. 
Specifically, for $0\le d\le n$, let $A(n,d)$ and $A^*(n,d)$ 
be the maximum size of a binary code and a binary {\em linear} code, respectively, of length $n$ and Hamming distance $d$. 
Furtheremore, we set  $\alpha(\delta)\triangleq \lim\sup_{n\to\infty} \frac1n \log A(n,d)$ and $\alpha^*(\delta)\triangleq \lim\sup_{n\to\infty} \frac1n \log A^*(n,d)$\,.
In Section~\ref{sec:bounds-xor}, we prove the following bounds.

\begin{theorem}\label{thm:main-xor}
Let $0\le d\le n$. 
%Let $A(n,d)$ and $A^*(n,d)$ \ey{let's move these definitions outside? and also the definitions of $\alpha^*(\delta)$ and $\alpha(\delta)$ and maybe different letters than $A$ and $\alpha$ to avoid confusion with the previous definitions}
%\hm{I agree to move out the notation. Just not so sure on what should the greek letters should be.}
%be the maximum size of a binary code and a binary {\em linear} code, respectively, of length $n$ and Hamming distance $d$. 
Then,
{%small
\begin{equation}\label{eq:main-xor}
\hspace{-0.5ex}\Big(A^*(n,d)\Big)^{1/s}\hspace{-2.25ex}-1 \hspace{-0.2ex}\le\hspace{-0.2ex} \maxcode_\xor(n,d;s) \hspace{-0.2ex}\le\hspace{-0.2ex} \Big(s! A(n,d)\Big)^{1/s}\hspace{-1.25ex}+s-1\,.   %HM: NICE!
\end{equation}
}
\end{theorem}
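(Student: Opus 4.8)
\medskip
\noindent\textbf{Proof plan.}
The starting point for both bounds is a simple reformulation: for distinct $S_1,S_2\in\binom{\cC}{\le s}$ we have $\xor(S_1)\oplus\xor(S_2)=\bigoplus_{\bfx\in S_1\triangle S_2}\bfx$, and since every $\xor$-value is a binary vector, $\norm{\xor(S_1)-\xor(S_2)}$ is exactly the Hamming weight of $\bigoplus_{\bfx\in S_1\triangle S_2}\bfx$; here $S_1\triangle S_2$ is a nonempty subset of $\cC$ of size at most $2s$. Thus the condition $\dist_\xor(\cC;s)\ge d$ is essentially the condition that every XOR of between $1$ and $2s$ distinct codewords has weight $\ge d$, and I would exploit this to bring in standard facts about binary codes.

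For the \emph{upper bound}, write $M=\maxcode_\xor(n,d;s)$ and let $\cC$ attain it (we may assume $M\ge s$, otherwise the bound is immediate). I would apply the map $S\mapsto\xor(S)$ to the size-$s$ subsets of $\cC$: by the reformulation, distinct such subsets map to binary vectors at Hamming distance $\ge\dist_\xor(\cC;s)\ge d$, so (for $d\ge1$) the map is injective and its image is a binary code of length $n$, cardinality $\binom Ms$, and minimum distance $\ge d$; hence $\binom Ms\le A(n,d)$. Then $s!\binom Ms=M(M-1)\cdots(M-s+1)\ge(M-s+1)^s$ (a product of $s$ factors each at least $M-s+1\ge1$) yields $(M-s+1)^s\le s!\,A(n,d)$, i.e.\ $M\le(s!\,A(n,d))^{1/s}+s-1$.

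For the \emph{lower bound}, I would start from an optimal binary linear $[n,k,\ge d]$ code $\cL$, so that $2^k=A^*(n,d)$, with a rank-$k$ generator matrix $G$. The key point is that if $\cC_0\subseteq\F_2^k$ has no nonempty subset of size $\le2s$ summing to $\vzero$ over $\F_2$, then $\cC:=\{\bfx G:\bfx\in\cC_0\}$ has $|\cC|=|\cC_0|$ (as $G$ is injective) and satisfies $\dist_\xor(\cC;s)\ge d$: for distinct $S_1,S_2\in\binom{\cC}{\le s}$, $\bigoplus_{\bfy\in S_1\triangle S_2}\bfy$ equals $\bigl(\bigoplus_{\bfx\in T_0}\bfx\bigr)G$ for a nonempty $T_0\subseteq\cC_0$ of size $\le2s$, which is therefore a nonzero codeword of $\cL$, of weight $\ge d$. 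This reduces matters to constructing a large zero-sum-free (``$B_s$-type'') set $\cC_0\subseteq\F_2^k$. When $s\mid k$ I would take $q=2^{k/s}$ and
\[
\cC_0=\bigl\{(\gamma,\gamma^{3},\gamma^{5},\dots,\gamma^{2s-1}):\gamma\in\F_q\setminus\{0\}\bigr\}\subseteq(\F_q)^{s}\cong\F_2^{k},
\]
which is precisely the column set of a parity-check matrix of a narrow-sense primitive BCH code of length $q-1$ and designed distance $2s+1$; by the BCH bound (a Vandermonde argument) no $\le2s$ of these vectors sum to $\vzero$, and $|\cC_0|=q-1=(A^*(n,d))^{1/s}-1$. (For $s\nmid k$ I would run the same construction over the largest admissible subfield; the ``$-1$'' degenerates, at $s=1$, to the choice $\cC_0=\F_2^k\setminus\{\vzero\}$.) Combining the two parts gives $\maxcode_\xor(n,d;s)\ge(A^*(n,d))^{1/s}-1$.

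I expect the upper bound to be routine: via the reformulation it reduces to recognizing that the $\xor$-images of the $s$-subsets of $\cC$ form a binary code of minimum distance $\ge d$, followed by an elementary inequality. The substantive part is the lower bound, and within it the construction of a zero-sum-free set $\cC_0\subseteq\F_2^k$ whose size matches $2^{k/s}$ as closely as the stated $-1$ demands; pinning down the precise constant—and treating the case $s\nmid k$—is where the primitive-BCH / Bose--Chowla construction and some arithmetic bookkeeping come in.
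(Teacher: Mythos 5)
Your proof is correct and follows essentially the same route as the paper: the upper bound pushes the $s$-subsets of $\cC$ through $\xor$ into a binary code of minimum distance $\ge d$ and applies the same estimate $s!\binom{M}{s}\ge(M-s+1)^s$, while the lower bound is exactly the paper's Theorem~\ref{thm:constr-xor} construction --- compose a generator matrix of an optimal linear $[n,k,d]$ code with the columns of a distance-$(2s{+}1)$ BCH parity-check matrix, which is precisely your Bose--Chowla zero-sum-free set $\cC_0$. The $s\nmid k$ bookkeeping you rightly flag as the delicate point is also elided in the paper (which simply invokes a shortened narrow-sense BCH code), so you are not behind on that front.
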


Then taking logarithms, we obtain the following sharp estimates.
\begin{corollary}\label{cor:xor} \hfill
\begin{enumerate}[(i)]
\item For every fixed $d$ with respect to $n$, it holds that $\rho_\xor(n,d;s)=\left(1-\frac1s\right)n + \frac1s\floorenv{\frac{d-1}{2}}\log n + o(\log n)$\,.
%\hm{check this again.}
\item Fix $0\le \delta\le 1$. 
%If we set the asymptotic rates $\alpha(\delta)\triangleq \lim\sup_{n\to\infty} \frac1n \log A(n,d)$ and $\alpha^*(\delta)\triangleq \lim\sup_{n\to\infty} \frac1n \log A^*(n,d)$\,.
%If $\alpha(\delta)\triangleq \lim\sup_{n\to\infty} \frac1n \log A(n,d)$ and $\alpha^*(\delta)\triangleq \lim\sup_{n\to\infty} \frac1n \log A^*(n,d)$\,,
Then $\frac1s \alpha^*(\delta) \le \maxrate_\xor(\delta;s) \le \frac1s \alpha(\delta)$\,.
\end{enumerate}
\end{corollary}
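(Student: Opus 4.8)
The corollary is essentially a matter of taking base-two logarithms in Theorem~\ref{thm:main-xor} and inserting standard code-size estimates, so the plan is to treat the two regimes of interest separately and, in each, to verify that the additive terms $-1$, $+(s-1)$ and the factor $s!$ appearing in~\eqref{eq:main-xor} are negligible.

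For part~(i), I would fix $d$, set $t\triangleq\floorenv{\frac{d-1}{2}}$, and first recall the two classical facts that pin down the redundancy of a binary code of constant distance $d$. The sphere-packing bound $A(n,d)\le 2^n\big/\sum_{i=0}^{t}\binom{n}{i}$ gives $\log A(n,d)\le n-t\log n+O(1)$. Conversely, there exist binary \emph{linear} codes of length $n$ and distance at least $d$ with redundancy $t\log n+O(1)$: primitive narrow-sense BCH codes when $d$ is odd, and their overall-parity extensions when $d$ is even (raising an odd designed distance $d-1$ to $d$ at the cost of one redundancy symbol), shortened to the desired length; hence $\log A^*(n,d)\ge n-t\log n-O(1)$. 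In particular $A^*(n,d),A(n,d)\to\infty$, so in~\eqref{eq:main-xor} one has $\log\big((A^*(n,d))^{1/s}-1\big)=\tfrac1s\log A^*(n,d)-o(1)$ and $\log\big((s!A(n,d))^{1/s}+s-1\big)=\tfrac1s\log A(n,d)+O(1)$. Substituting the two redundancy estimates and using $\rho_\xor(n,d;s)=n-\log\maxcode_\xor(n,d;s)$ then shows $\rho_\xor(n,d;s)=(1-\tfrac1s)n+\tfrac1s t\log n+O(1)$, which is the stated identity (indeed with the sharper error term $O(1)$ in place of $o(\log n)$).

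For part~(ii), I would fix $\delta\in[0,1]$ and abbreviate $A_n\triangleq A(n,\floorenv{\delta n})$ and $A^*_n\triangleq A^*(n,\floorenv{\delta n})$, noting that $2\le A^*_n\le A_n$ for every $n$ because $\{\vzero,\boldsymbol{1}\}$ is a binary linear code of distance $n\ge\floorenv{\delta n}$. Writing $(A^*_n)^{1/s}-1=(A^*_n)^{1/s}\big(1-(A^*_n)^{-1/s}\big)$ and $(s!A_n)^{1/s}+s-1=(s!A_n)^{1/s}\big(1+(s-1)(s!A_n)^{-1/s}\big)$ and taking logarithms in~\eqref{eq:main-xor}, the correction terms $\log\big(1-(A^*_n)^{-1/s}\big)$, $\log\big(1+(s-1)(s!A_n)^{-1/s}\big)$ and $\tfrac1s\log s!$ are all bounded uniformly in $n$ (using $A^*_n,A_n\ge 2$). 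Dividing by $n$ therefore gives $\tfrac1n\log\maxcode_\xor(n,\floorenv{\delta n};s)\ge\tfrac1{sn}\log A^*_n-o(1)$ and $\tfrac1n\log\maxcode_\xor(n,\floorenv{\delta n};s)\le\tfrac1{sn}\log A_n+o(1)$; taking $\limsup_{n\to\infty}$ and recalling $\alpha^*(\delta)=\limsup_n\tfrac1n\log A^*_n$, $\alpha(\delta)=\limsup_n\tfrac1n\log A_n$ yields $\tfrac1s\alpha^*(\delta)\le\maxrate_\xor(\delta;s)\le\tfrac1s\alpha(\delta)$.

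I do not expect a genuine obstacle here: the whole argument is bookkeeping around Theorem~\ref{thm:main-xor}. The only external ingredients are the two constant-distance estimates used in part~(i) --- of these, the only slightly delicate point is the even-$d$ case of the linear lower bound, handled by parity-extending a BCH code --- and the only thing to watch in part~(ii) is that every correction term stays uniformly bounded, for which the trivial bound $A^*_n\ge 2$ together with $A(n,d),A^*(n,d)\to\infty$ for fixed $d$ suffices.
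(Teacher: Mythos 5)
Your proof is correct and follows exactly the route the paper has in mind: substitute the Hamming (sphere-packing) upper bound on $A(n,d)$ and the shortened/parity-extended BCH lower bound on $A^*(n,d)$ into Theorem~\ref{thm:main-xor}, take logarithms, and check that the $-1$, $+(s-1)$ and $s!$ corrections contribute only $O(1)$ (for part (i), using $A(n,d),A^*(n,d)\to\infty$) or $o(n)$ (for part (ii), using $A^*_n\ge 2$). Your observation that the error term in (i) is in fact $O(1)$ rather than $o(\log n)$ is a valid sharpening; the even-$d$ case via parity extension of a designed-distance-$(d-1)$ BCH code is handled correctly.
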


\subsection{$\lambda$-$\add$-Codes}

We define our main object of study -- {\em $\lambadd$-codes}.
To this end, we let $\bflamb=(\lambda_0,\lambda_1,\ldots, \lambda_s)\in\RR^{s+1}$ such
that $\lambda_0=0$ and $\lambda_1=1$. % and $\lambda_1\le \lambda_2\le \cdots \le \lambda_s$.
We first define the operation $\boxplus_{\lambda}$ on a multi-set of $s$ bits as follows
\begin{equation}
x_1 \boxplus_{\bflamb} x_2 \boxplus_{\bflamb} \cdots \boxplus_{\bflamb} x_s = \lambda_{t}, \text{ where } t = |\{ i :~x_i=1\}|\,.
\end{equation}
Then, for $s'\leq s$, the operation $\boxplus_{\bflamb}$ on the set $\{\bfx_1,\ldots, \bfx_{s'}\}\in \binom{\{0,1\}^n}{\le s}$ 
is defined component-wise. So, here, the function $\bff$ of $\lambadd$-codes corresponds to $\boxplus_{\bflamb}$.

Now, we note that we recover previous code classes by specializing the $(s+1)$-tuples $\bflamb$.
\begin{itemize}
\item When $\bflamb=(0,1,1,\ldots,1)$, we obtain $\orc$-codes.
\item When $\bflamb=(0,1,0,\ldots,s\bmod{2})$, we obtain $\xor$-codes.
\item When $\bflamb=(0,1,2,\ldots,s)$, we obtain $\add$-codes defined in~\cite{csiszar2011information}.
\end{itemize}
 When $\bflamb=(0,1,2,\ldots,s)$, we simply refer to $\lambadd$-codes as $\add$-codes and the corresponding rates $\maxrate_\add(s)$ have been extensively studied. 
 As before, we state its best-known estimates and we refer the interested reader to~\cite{dyachkov2014lectures} and the references therein for further details.

\begin{theorem}[{see \cite[Section 5]{dyachkov2014lectures}}]
\label{thm:or}
There are computable constants $L^\add_s$ and $U^\add_s$ 
such that $L^\add_s\le \maxrate_{\add}(s)\le U^\add_s$ for $s\ge 2$. Here, we list the first few values of $L^\add_s$ and $U^\add_s$. 
\begin{center}
\begin{tabular}{|c||c|c|c|c|c|}
\hline
$s$    & 2 & 3 & 4 & 5 & 6 \\ \hline
$U^\add_s$  & 0.600 & 0.562 & 0.450 & 0.419 & 0.364\\
$L^\add_s$  & 0.500 & 0.336 & 0.267 & 0.225 & 0.195 \\ \hline
\end{tabular}
\end{center}
\end{theorem}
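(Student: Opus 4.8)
The statement bundles together a computable lower bound $L_s^\add$ and a computable upper bound $U_s^\add$; the entries in the table are simply the numerical values that these two optimizations return (supplemented, for the smallest values of $s$, by an explicit construction). The plan is to exhibit each bound as the limiting exponent of a finite optimization over a compact parameter set, which makes ``computable'' automatic.

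\textbf{Lower bound.} My plan is the probabilistic method with expurgation. Draw $N=\ceilenv{2^{rn}}$ codewords independently, each coordinate i.i.d.\ $\mathrm{Bernoulli}(p)$. Call an unordered pair of distinct members $S_1,S_2\in\binom{\cC}{\le s}$ \emph{confusable} if $\add(S_1)=\add(S_2)$; after cancelling the codewords common to $S_1$ and $S_2$ one may assume they are disjoint and nonempty, of sizes $k_1,k_2\in\{1,\dots,s\}$, and it is easy to rule out $(k_1,k_2)=(1,1)$. For disjoint subsets the per-coordinate collision probability equals $P_{k_1,k_2}(p)=\Pr[\mathrm{Bin}(k_1,p)=\mathrm{Bin}(k_2,p)]$ for independent binomials, so a confusable pair occurs with probability $P_{k_1,k_2}(p)^{n}$, and there are at most $N^{k_1+k_2}$ candidate pairs. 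If $\sum_{k_1,k_2}N^{k_1+k_2}P_{k_1,k_2}(p)^{n}<N/4$ and also $r<H(p)$ (so that the sample contains at least $N/2$ distinct vectors with high probability), then deleting one codeword from each confusable pair leaves an $\add$-code of size $\gtrsim N/4$. Letting $n\to\infty$ shows every $r<\min\bigl\{H(p),\ \min_{k_1,k_2}\tfrac{-\log P_{k_1,k_2}(p)}{k_1+k_2}\bigr\}$ is achievable, and maximizing over $p\in(0,1)$ yields a computable $L_s^\add$. I would also flag that for very small $s$ this random bound is not optimal (for $s=2$ it yields only $\approx0.47$), and there one substitutes an explicit $B_s$/Sidon-type construction --- for instance one derived from Singer perfect difference sets --- which attains the tabulated $L_s^\add$, in particular $L_2^\add=1/2$.

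\textbf{Upper bound.} Here I would count the achievable sums. If $\cC$ is an $\add$-code with $|\cC|=N$, the $\binom{N}{s}$ vectors $\add(S)$ over the size-$s$ subsets $S\subseteq\cC$ are pairwise distinct and lie in $\{0,1,\dots,s\}^{n}$, whence $\binom{N}{s}\le\#\{\text{achievable sum vectors}\}$. The crude bound $\#\le(s+1)^{n}$ only gives $\maxrate_\add(s)\le\frac1s\log(s+1)$, which is far too weak, so all the work is in sharpening the count. I would sort the achievable sums by \emph{type} --- the fractions $q_0,\dots,q_s$ of coordinates equal to $0,\dots,s$ --- so that a type contributes $2^{nH(q_0,\dots,q_s)+o(n)}$ vectors; note that $\sum_j jq_j$ is tied to the weight profile of the constituent codewords; and, crucially, discount further by the number of codeword multisets compatible with a fixed sum vector (a convolution/recurrence estimate of the kind used for screening designs), which pushes the exponent strictly below $\log(s+1)$. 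Taking logarithms and maximizing $\tfrac1s\cdot(\text{resulting exponent})$ over the finitely many continuous parameters (the limiting weight fraction and any auxiliary distributions) on their compact ranges gives the computable $U_s^\add$, whose numerical value matches the table.

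\textbf{Where the difficulty lies.} The hard part is the sharpness of the upper bound: the elementary counting estimate is nowhere near the tabulated $U_s^\add$ (e.g.\ $\tfrac12\log 3\approx0.79$ versus $0.6$ for $s=2$), so one has to carry through the refined compatibility count together with the accompanying constrained-entropy/LP optimization carefully --- this is analytic bookkeeping rather than a new idea, but it is the crux. A secondary, smaller obstacle is that the tabulated lower bounds for small $s$ genuinely require the explicit difference-set constructions, since the clean random-coding argument falls short; one must also verify that each optimization is over a compact domain with a continuous objective, so that the constants produced are indeed computable.
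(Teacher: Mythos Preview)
The paper does not prove this theorem. It is stated as a known result with an explicit citation to D'yachkov's lecture notes, and the surrounding text says only that ``for $s=2$, the lower bound $0.5$ is obtained by Lindstr\"om using results from additive number theory'' and ``for $s\ge 3$, the lower bound is obtained using the random coding method''~\cite{dyachkov1981coding,poltyrev1987improved}; nothing at all is said about how the upper bounds are derived beyond the reference to~\cite{dyachkov2014lectures}. So there is no ``paper's own proof'' to compare your proposal against.

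That said, your lower-bound plan is in exact agreement with what the paper sketches: random coding with expurgation for general $s$, supplemented by an explicit Sidon/$B_2$-type construction for $s=2$ (your Singer difference-set remark is precisely the Lindstr\"om result the paper cites). Your upper-bound plan, however, is not really a plan: ``discount further by the number of codeword multisets compatible with a fixed sum vector'' and ``constrained-entropy/LP optimization'' are placeholders, not an argument, and you yourself flag that the elementary count $\tfrac1s\log(s+1)$ is far from the tabulated values. The actual derivation of $U_s^{\add}$ in the cited literature is a nontrivial information-theoretic/combinatorial argument, and nothing in your write-up indicates you have identified the mechanism that closes the gap from $0.79$ down to $0.6$ for $s=2$. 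Since the paper also does not reproduce that argument, this is not a discrepancy with the paper --- but as a proof proposal for the upper bound it is incomplete.
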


We remark that for $s=2$, the lower bound 0.5 is obtained by Lindstr{\"o}m using results from additive number theory~\cite{lindstrom1969determination}.
For $s\ge 3$, the lower bound is obtained using the random coding method~\cite{dyachkov1981coding,poltyrev1987improved}.
However, for $\add$-codes with distance strictly greater than one, we are unaware of any results.
Nevertheless, for $s=2$, we obtain a GV-type lower bound using an analysis similar to the random coding method.

Specifically, in Section~\ref{sec:bounds-lamb}, we derive a GV-type lower bound for arbitrary general $\bflamb=(1,\lambda)$.
In the same section, we also obtain upper and lower bounds for the size of $\lambda$-$\add$-codes. Of significance, in our constructions, we make use of $\xor$-codes and for certain regimes, the rates of such codes exceed those guaranteed by random coding.
Such comparisons are given in Section~\ref{sec:comparison}.

\section{Bounds For $\xor$-Codes}
\label{sec:bounds-xor}
In this section, we examine $\xor$ codes and in particular, provide complete proofs for Theorem~\ref{thm:main-xor}.
First, we demonstrate the lower bound of \eqref{eq:main-xor} and to this end, we have the following construction.

\begin{theorem}\label{thm:constr-xor}
If there exists an $[M,M-k,2s+1]$-linear code  $\cC_k$ and an $[n,k,d]$-linear code $\cC_n$,
then there exists a code $\cC$ of length $n$ and size $M$ such that $\dist_\xor(\cC)$ is at least $d$.
\end{theorem}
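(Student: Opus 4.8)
The plan is to obtain $\cC$ by a concatenation-style construction: the dual of $\cC_k$ supplies $M$ vectors in $\F_2^k$ no $2s$ of which sum to zero over $\F_2$, and a generator matrix of $\cC_n$ then maps $\F_2^k$ into $\{0,1\}^n$ so that every nonzero combination of at most $2s$ of these vectors lands on a nonzero codeword of $\cC_n$, hence on a word of weight at least $d$.

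Concretely, first I would fix a $(k\times M)$ parity-check matrix $\bfH$ of $\cC_k$ (it has full row rank $k$, since $\cC_k$ has dimension $M-k$) and a $(k\times n)$ generator matrix $\bfG$ of $\cC_n$ (of rank $k$). Writing $\bfh_1,\dots,\bfh_M\in\F_2^k$ for the columns of $\bfH$, I set $\bfc_i\triangleq\bfh_i\bfG\in\{0,1\}^n$ for $i\in[M]$ and $\cC\triangleq\{\bfc_1,\dots,\bfc_M\}$; each $\bfc_i$ has length $n$, as required. The key reduction is that for binary vectors the $\ell_1$ distance equals Hamming distance, so for any $S_1\neq S_2$ in $\binom{\cC}{\le s}$ we have $\norm{\xor(S_1)-\xor(S_2)}=\mathrm{wt}\big(\bigoplus_{\bfx\in S_1\triangle S_2}\bfx\big)$, and $S_1\triangle S_2$ is a nonempty subset of $\cC$ of size at most $2s$. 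Hence it suffices to show that $\mathrm{wt}\big(\bigoplus_{i\in T}\bfc_i\big)\ge d$ for every nonempty $T\subseteq[M]$ with $|T|\le 2s$.

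For this last step I would invoke the two distance hypotheses in turn. Since $\bigoplus_{i\in T}\bfc_i=\big(\bigoplus_{i\in T}\bfh_i\big)\bfG$ and $\cC_k$ has minimum distance $2s+1$, any at most $2s$ columns of $\bfH$ are $\F_2$-linearly independent, so $\bfu\triangleq\bigoplus_{i\in T}\bfh_i\neq\vzero$ whenever $1\le |T|\le 2s$; and since $\bfG$ has rank $k$, $\bfu\bfG$ is a nonzero codeword of $\cC_n$, hence $\mathrm{wt}(\bfu\bfG)\ge d$. Applying this with $|T|=2$ also shows the $\bfc_i$ are pairwise distinct, so $|\cC|=M$. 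I do not expect a genuine obstacle here; the only point needing care is the reduction from the definition of $\dist_\xor(\cC)$ (a minimum over \emph{pairs} of subsets) to small $\xor$-combinations of single codewords — once that bookkeeping is set up, the two linear-code properties plug in directly.
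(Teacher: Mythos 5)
Your proof is correct and follows essentially the same route as the paper's: you build $\cC$ from the columns of a parity-check matrix of $\cC_k$ pushed through a generator matrix of $\cC_n$, then reduce the $\xor$-distance condition to showing that no $\F_2$-combination of at most $2s$ of the $\bfh_i$'s vanishes, which is exactly the minimum-distance-$2s+1$ property of $\cC_k$. The only cosmetic differences are that you make the symmetric-difference bookkeeping and the verification $|\cC|=M$ explicit, whereas the paper states the cancellation step more tersely.
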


\begin{proof}
Let $\bfh_1\,\ldots,\bfh_M$ be the length-$k$ column vectors of a parity check matrix $\bfH_k$ of $\cC_k$.
Also, let $\bfG_n$ be a $k\times n$-generator matrix of $\cC_n$. Then we set
\[\cC \triangleq \{ \bfG^T\bfh_i  : i\in [M]\}\,.\]

We claim that $\dist_\xor(\cC)$ is at least $d$. 
Now, since $\cC$ is a subcode of $\cC_n$, the $\xor$-distance of $\cC$ is at least $d$,
provided that $\xor(S)-\xor(S')$ is nonzero for any two distinct subsets $S,S'\in \binom{\cC}{\le s}$.
Suppose to the contrary that $\xor(S)-\xor(S')=\vzero$.
Since $\bfG$ is a generator matrix and each summand in $S$ or $S'$ is of the form $\bfG^T\bfh_i$, we have $s'$ columns in $\bfH_k$ summing to $\vzero$ with $s'\le 2s$.
This then contradicts the fact that $\bfH_k$ is a parity check matrix for the $[M,M-k,2s+1]$-code $\cC_k$.
\end{proof}

To complete the proof of the lower bound of Theorem~\ref{thm:main-xor}, we set $\cC_k$ to be a shortened narrow-sense BCH code (see~\cite[Ch. 9]{MS1977}). 
Specifically, for $n$, $d$, $s$, we first find an $[n,k,d]$-linear code with dimension at least $k\triangleq\log A^*(n,d)$. Then, we set $M$ such that $s\log(M+1)\ge k$ and we have a shortened $[M,M-s\log(M+1), \ge 2s+1]$-BCH code. 
Applying Theorem~\ref{thm:constr-xor}, we obtain the lower bound in~\eqref{eq:main-xor}.

Next, we derive the upper bound in~\eqref{eq:main-xor}.
\begin{proof}[Proof of Upper Bound in Theorem~\ref{thm:main-xor}]
For brevity, we denote the value $\maxcode_\xor(n,d;s)$ by $M$, and suppose that $\cC$ is a code with $d_\xor(\cC)\ge d$ and $|\cC|=M$.
We first prove the following:
\begin{equation}\label{eq:upper-xor}
\sum_{s'=1}^s \binom{M}{s'}\le A(n,d)\,.    
\end{equation}
%Consider the code 
Let $\cC_\xor = \{ \xor(S)~:~S\in \binom{\cC}{\le s}\}$.
Since the image $\xor(S)$ remains in $\{0,1\}^n$, we have that $\dist_\xor(\cC)$ is the Hamming distance of $\cC_\xor$. In other words, the size of the code $\cC_\xor$ is at most $A(n,d)$, as required. Lastly, since $\sum_{s'=1}^s \binom{M}{s'}\ge \binom{M}{s}\ge \frac{(M-s+1)^s}{s!}$, the inequality in~\eqref{eq:main-xor} follows.
\end{proof}

\section{Bounds for $\lambda$-$\add$-Codes}
\label{sec:bounds-lamb}

\begin{table}[!t]
\centering
\begin{tabular}{|c|cccc|}\hline
    $u$ & 0 & 0 & 1 & 1  \\ 
    $v$ & 0 & 1 & 0 & 1  \\ \hline 
    $u \oplus v$ &  0 & 1 & 1 & 0 \\ 
    $u \vee v$ &  0 & 1 & 1 & 1 \\
    $u + v$ &  0 & 1 & 1 & 2 \\
    $u \boxplus_{\bflamb} v$ &  0 & 1 & 1 & $\lambda$ \\ \hline
\end{tabular}
\vspace{1mm}

\caption{List of binary operations considered in this paper.}
\label{table:operations}
\end{table}

In this section, we study the class of $\lambda$-$\add$ codes for the case $s=2$.
We defer the investigation for $s>2$ to future work.
Henceforth, we set $\bflamb=(0,1,\lambda)$ with $1<\lambda\le 2$ and estimate the quantity $\maxcode_{\lambadd}(n,d;2)$ for $d$ and $n$.

First, we consider a code $\cC$ and study its $\lambadd$-distance relative to its $\orc$-, $\xor$-, and $\add$-distances.

\begin{proposition}\label{prop:lamb-others}
    Set $\bflamb=(0,1,\lambda)$ with $1<\lambda\le 2$.
    If $\cC$ is a binary code, then we have that
    \begin{equation}\label{eq:lambda-xor}
        \dist_{\lambadd}(\cC)\ge (\lambda-1)\dist_{\add}(\cC)\ge (\lambda-1)\dist_{\xor}(\cC)\,. 
    \end{equation}
    Also, we have that 
    \begin{equation}
        \dist_{\lambadd}(\cC)\ge \dist_{\orc}(\cC)\,.
    \end{equation}
    Therefore, $\maxcode_\lambadd (n,d;2) \ge \maxcode_\xor\left(n,\ceilenv{\frac{d}{\lambda-1}};2\right)$  and 
    $\maxcode_\lambadd (n,d;2) \ge \maxcode_\orc\left(n,d;2\right)$.
\end{proposition}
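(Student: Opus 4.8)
The plan is to reduce all four inequalities to a short list of pointwise checks on the set $\{0,1,2\}$. The starting observation is that, because $s=2$, each of $\boxplus_{\bflamb}(S)$, $\orc(S)$ and $\xor(S)$ is obtained from the integer-valued sum by applying one fixed scalar map in each coordinate. Concretely, write $S=\{\bfx\}$ or $S=\{\bfx,\bfy\}$ and let $\bfg\in\RR^n$ be the real sum $\sum_{\bfz\in S}\bfz$; every entry of $\bfg$ lies in $\{0,1,2\}$, and in $\{0,1\}$ when $|S|=1$. Then, coordinatewise, $\boxplus_{\bflamb}(S)=g(\bfg)$, $\orc(S)=k(\bfg)$ and $\xor(S)=h(\bfg)$, where $g,h,k\colon\{0,1,2\}\to\RR$ are defined by $g(0)=0$, $g(1)=1$, $g(2)=\lambda$; $h(0)=0$, $h(1)=1$, $h(2)=0$; and $k(0)=0$, $k(1)=1$, $k(2)=1$. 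Thus $\bfg$ is exactly the $\add$-value of $S$, and $\orc,\xor,\boxplus_{\bflamb}$ are post-compositions of it with $k,h,g$ respectively.

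Next I would establish three scalar inequalities, valid for all $a,b\in\{0,1,2\}$: (i) $|g(a)-g(b)|\ge(\lambda-1)|a-b|$; (ii) $|h(a)-h(b)|\le|a-b|$; and (iii) $|g(a)-g(b)|\ge|k(a)-k(b)|$. Each is verified by inspecting the three unordered pairs $\{0,1\}$, $\{1,2\}$, $\{0,2\}$ (the case $a=b$ being trivial), and each case is a one-line computation. This is precisely where the hypothesis $1<\lambda\le2$ is used: for (i) the pair $\{0,2\}$ gives $\lambda\ge2(\lambda-1)$, i.e. $\lambda\le2$; for (iii) the pair $\{0,2\}$ gives $\lambda\ge1$; the remaining cases are immediate.

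I would then fix an arbitrary pair of distinct nonempty subsets $S_1,S_2\in\binom{\cC}{\le2}$, apply the relevant scalar inequality in each of the $n$ coordinates to the two real sums $\add(S_1),\add(S_2)$, and add over the coordinates. For that fixed pair this yields $\norm{\boxplus_{\bflamb}(S_1)-\boxplus_{\bflamb}(S_2)}\ge(\lambda-1)\norm{\add(S_1)-\add(S_2)}$, $\norm{\xor(S_1)-\xor(S_2)}\le\norm{\add(S_1)-\add(S_2)}$, and $\norm{\boxplus_{\bflamb}(S_1)-\boxplus_{\bflamb}(S_2)}\ge\norm{\orc(S_1)-\orc(S_2)}$. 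Since these hold for every admissible pair --- in particular also for pairs on which some right-hand side vanishes --- taking the minimum over all such pairs, and using $\lambda-1>0$ to pull the constant through the minimum, gives $\dist_{\lambadd}(\cC)\ge(\lambda-1)\dist_{\add}(\cC)\ge(\lambda-1)\dist_{\xor}(\cC)$ together with $\dist_{\lambadd}(\cC)\ge\dist_{\orc}(\cC)$, which are the two displayed chains.

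Finally, for the two size bounds I would produce explicit witnesses. Let $\cC$ be a $\xor$-code of length $n$ attaining $\maxcode_\xor(n,\ceilenv{d/(\lambda-1)};2)$, so that $\dist_\xor(\cC)\ge\ceilenv{d/(\lambda-1)}\ge d/(\lambda-1)$; then $\dist_{\lambadd}(\cC)\ge(\lambda-1)\dist_\xor(\cC)\ge d$, so this same $\cC$ certifies $\maxcode_{\lambadd}(n,d;2)\ge\maxcode_\xor(n,\ceilenv{d/(\lambda-1)};2)$. Likewise, an optimal $\orc$-code of length $n$ with $\orc$-distance at least $d$ has $\lambadd$-distance at least $d$ by the second chain, giving $\maxcode_{\lambadd}(n,d;2)\ge\maxcode_{\orc}(n,d;2)$. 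I do not expect any genuine obstacle here: the only real care needed is tracking the direction of each inequality and noticing that the interval $1<\lambda\le2$ is exactly what makes all three scalar inequalities hold simultaneously.
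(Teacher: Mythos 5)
Your proposal is correct and takes essentially the same approach as the paper: reduce to coordinatewise scalar inequalities and verify them by exhaustive case analysis (the paper tabulates all values of $u\oplus v$, $u\vee v$, $u+v$, $u\boxplus_\lambda v$ for $u,v\in\{0,1\}$ and reads off the three inequalities, which is the same finite check you organize via the maps $g,h,k$ on $\{0,1,2\}$). One small correction: in verifying $|g(a)-g(b)|\ge(\lambda-1)|a-b|$, the pair $\{0,1\}$ also requires $\lambda\le 2$ (it gives $1\ge\lambda-1$), so that case is not entirely ``immediate''; it just happens to impose the same constraint as $\{0,2\}$.
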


\begin{proof}
Since all distances are computed componentwise, it suffices to check that the inequalities hold for each component. In other words, for $u,v,a,b\in\{0,1\}$, we need to check that
\begin{align*}
    \norm{(u\boxplus_\lambda v)-(a\boxplus_\lambda b)} &\geq (\lambda-1) \norm{(u+v)-(a+b)},\\
    \norm{(u + v)-(a + b)} &\geq  \norm{(u \oplus v)-(a\oplus b)},\\
    \norm{(u\boxplus_\lambda v)-(a\boxplus_\lambda b)} &\geq  \norm{(u\vee v)-(a\vee b)}\,.
\end{align*}
We then verify these inequalities using Table~\ref{table:operations}.
\end{proof}
Therefore, using $\orc$-, $\xor$-, and $\add$-codes, we are able to construct $\lambadd$-codes.
Next, we derive a GV-type lower bound for $\lambadd$-codes in the regime where $d$ is proportional to $n$, 
and in Section~\ref{sec:comparison}, we compare all four constructions.

\subsection{A GV-Type Lower Bound}
\label{sec:lamb-gv}

We provide GV-type lower bounds in the asymptotic regime $d=\delta n$ where $\delta$ is a fixed constant.
When $\delta=0$, similar lower bounds can be obtained via {\em random coding} arguments found in \cite{dyachkov1981coding} (see also the notes in \cite[Section~5.3]{dyachkov2014lectures}). 
Here, we provide an argument using the delightful {\em probabilistic method} (see \cite[Section~3.2]{alon2016probabilistic}). 
Specifically, we use the method to provide a lower bound on the independence number of certain hypergraphs.

Formally, a {\em hypergraph $\cG = (\cV,\cE)$} is defined by a set $\cV$ of {\em vertices}, and a collection $\cE$ of subsets of $V$, known as {\em hyperedges}.
A subset $S\subseteq \cV$ of vertices is {\em independent} if no $i$ vertices in $S$ form a hyperedge of size $i$ in $\cE$ (for $1\le i\le |\cV|$). 
The {\em independence number $\sigma(\cG)$} 
%\ey{Can we use a different notation since we already used $\alpha$ before} 
is given by the size of a largest independent set.
%\hm{Eitan, I've change the definition of independent set and it is as what we have discussed. That means for any hyperedge of size $i$, we pick only $i-1$ vertices. I (or rather Luke) also amended the proof and statement for Theorem~5, so that $E_i$ gives the total number of hyperedges. Turns out to be much much neater.}\ey{That's ok with me now. Thanks a lot!}
%\hm{Change notation}

For our coding problem, we consider a hypergraph $\cG(n,d) = (\cV(n,d),\cE(n,d))$. Here, $\cV(n,d)$ is comprised of all binary words of length $n$ and $\cE(n,d)$ comprises certain $2$-, $3$-, and $4$-subsets of $\cV(n,d)$. Specifically,
\begin{enumerate}[(G1)]
\item $\{\bfa,\bfb\}$ belongs to $\cE(n,d)$ if and only if $\norm{\bfa-\bfb} < d$.
\item $\{\bfa,\bfb,\bfc\}$ belongs to $\cE(n,d)$ if and only if 
$\norm{(\bfa\boxplus_{\lambda}\bfb)-\bfc} < d$, 
$\norm{(\bfa\boxplus_{\lambda}\bfc)-\bfb} < d$, or,
$\norm{(\bfb\boxplus_{\lambda}\bfc)-\bfa} < d$.

\item $\{\bfa,\bfb,\bfc,\bfd\}$ belongs to $\cE(n,d)$ if and only if 
$\norm{(\bfa\boxplus_{\lambda}\bfb)-(\bfc\boxplus_{\lambda}\bfd)} < d$, 
$\norm{(\bfa\boxplus_{\lambda}\bfc)-(\bfb\boxplus_{\lambda}\bfd)} < d$, or
$\norm{(\bfa\boxplus_{\lambda}\bfd)-(\bfb\boxplus_{\lambda}\bfc)} < d$. 
%$d(\bfa\boxplus_{\lambda} \bfb ,\bfc \boxplus_{\lambda}\bfd) < d$, 
%$d(\bfa\boxplus_{\lambda} \bfc ,\bfb \boxplus_{\lambda}\bfd)< d$, or
%$d(\bfa\boxplus_{\lambda} \bfd ,\bfb \boxplus_{\lambda}\bfc)< d$. 
\end{enumerate}

Given the hypergraph $\cG(n,d)$, we see that an independent set $\cC\subseteq \cV(n,d)$ yields a code $\cC$ with $\dist_\lambadd(\cC;2)\ge d$.
Therefore, our task is to obtain a lower bound for $\sigma(\cG(n,d))$.
To this end, we demonstrate the following lower bound for general hypergraphs.
The proof adapts the lower bound for usual graphs to hypergraphs.

\begin{comment}\label{thm:hypergraphs}
Let $\cG=(\cV,\cE)$ be a hypergraph whose hyperedges have sizes belonging to $\{2,3,4\}$. Suppose that  $|\cV|=N$ and for $i\in \{2,3,4\}$, the number of hyperedges of size $i$ is given by $N^{i-1} E_i$. 
%If $E_4\ge E_2$, $E_4\ge E_3$, and $E_4\le N$, then 
%\begin{equation}
%\sigma(\cG) \ge C (N / E_4)^{1/3},
%\end{equation}
%for some constant $C$ (independent of $\cG$).
If we set
\begin{equation}\label{eq:GVsize}
M=\min\Bigg\{ 
\frac{N}{E_2},\,
\left(\frac{N}{E_3}\right)^{1/2},\,
\left(\frac{N}{E_4}\right)^{1/3}\Bigg\}\,,
\end{equation}
then $\sigma(\cG)\ge M/9$.
\end{comment}

{%\color{teal}NEW VERSION: 
\begin{theorem}\label{thm:hypergraphs}
Let $\cG=(\cV,\cE)$ be a hypergraph whose hyperedges have sizes belonging to $\{2,3,4\}$. Suppose that  $|\cV|=N$ and for $i\in \{2,3,4\}$, the number
of hyperedges of size $i$ is given by $E_i$. 
Then
\begin{equation}\label{eq:GVsize}
\sigma(\cG)\ge \min\left\{\frac1{16} \left(\frac{N^i}{E_i}\right)^{1/(i-1)}: i\in\{2,3,4\}\right\}\,.
\end{equation}
\end{theorem}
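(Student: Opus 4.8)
The plan is to adapt the classical deletion (alteration) method for the independence number of graphs to the hypergraph setting. First I would let $p \in [0,1]$ be a parameter to be chosen later, and form a random subset $\cS \subseteq \cV$ by including each vertex independently with probability $p$. Let $X = |\cS|$ be the number of chosen vertices, and for each $i \in \{2,3,4\}$ let $Y_i$ be the number of hyperedges of size $i$ that are entirely contained in $\cS$. By linearity of expectation, $\avg[X] = pN$ and $\avg[Y_i] = p^i E_i$, since a fixed hyperedge of size $i$ survives precisely when all $i$ of its vertices are chosen. Hence $\avg[X - \sum_{i} Y_i] = pN - \sum_{i=2}^4 p^i E_i$.

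Next, from any outcome of $\cS$ I would delete one vertex from each surviving hyperedge; this destroys all hyperedges and leaves an independent set of size at least $X - \sum_i Y_i$. Therefore $\sigma(\cG) \ge \avg[X - \sum_i Y_i] = pN - \sum_{i=2}^4 p^i E_i$ for every $p \in [0,1]$. The remaining task is purely a matter of choosing $p$ and bounding the resulting expression. If I set $M := \min\{ (N^i/E_i)^{1/(i-1)} : i \in \{2,3,4\}\}$ and take $p = c/M$ for a suitable absolute constant $c < 1$, then for each $i$ we have $p^i E_i = c^i E_i / M^i \le c^i E_i / M^{i-1} \cdot (1/M)$; using $M^{i-1} \le N^i/E_i$, i.e. $E_i/M^{i-1} \le E_i \cdot E_i /N^i$... more cleanly: $M^{i-1} \le N^i / E_i$ gives $E_i \le N^i/M^{i-1}$, so $p^i E_i \le (c/M)^i \cdot N^i/M^{i-1}$. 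This does not immediately simplify, so instead I would use the bound $M \le N$ (valid since $E_i \ge$ a constant, or by a direct check) together with $p^{i-1} = c^{i-1}/M^{i-1} \le c^{i-1} E_i/N^i$ to get $p^i E_i \le c^{i-1} p E_i \cdot E_i/N^i$; rather, the cleanest route is: $M^{i-1} \le N^i/E_i \Rightarrow E_i/N^i \le 1/M^{i-1} \Rightarrow p^i E_i = p \cdot (p^{i-1} E_i) \le p \cdot (c^{i-1}/M^{i-1}) E_i \le p \cdot c^{i-1} \cdot (E_i/M^{i-1})$, and since $E_i \le N^i/M^{i-1} \le N \cdot (N/M)^{i-1} $...

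Let me state the clean version I would actually write: choosing $p = 1/(2M)$ where $M \le N$, one has $pN = N/(2M) \ge 1/2$, and for each $i$, $p^i E_i \le p^{i-1} \cdot pN \cdot (E_i/N) \cdot$ — no. The genuinely clean estimate is $p^i E_i = (pN)\cdot p^{i-1} \cdot (E_i/N)$ and $p^{i-1}E_i \le (1/(2M))^{i-1} E_i \le E_i/(2^{i-1} M^{i-1}) \le E_i/(2^{i-1}) \cdot (E_i/N^i) = $ still messy because $M^{i-1} \le N^i/E_i$ only gives $1/M^{i-1} \ge E_i/N^i$, the wrong direction. The correct use is $M \le (N^i/E_i)^{1/(i-1)}$, hence $E_i \le N^i/M^{i-1}$, hence $p^i E_i \le p^i N^i/M^{i-1} = (pN)^i / (N^{i} M^{i-1}) \cdot N^i$, i.e. $p^i E_i \le (pN)^i/M^{i-1}$. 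With $p = 1/(2M)$ this is $(N/(2M))^i / M^{i-1}$; this is only small if $N/M$ is controlled. So in fact one should reverse the roles: since $M^{i-1}\le N^i/E_i$, we have $E_i \le N^i/M^{i-1}$, so $p^i E_i \le p^i N^i/M^{i-1}$. Pick $p$ so that $pN = M/4$, i.e. $p = M/(4N) \le 1/4$; then $p^i N^i = (M/4)^i$ and $p^i E_i \le (M/4)^i/M^{i-1} = M/4^i$, whence $\sum_{i=2}^4 p^i E_i \le M(1/16 + 1/64 + 1/256) < M/8$, and $\sigma(\cG) \ge pN - \sum_i p^i E_i \ge M/4 - M/8 = M/8 \ge M/16$. (One must also check $p \le 1$, i.e. $M \le 4N$, which holds since $M \le (N^i/E_i)^{1/(i-1)} \le N$ whenever $E_i \ge 1$; if some $E_i = 0$ that term is vacuous.) The main obstacle, and the only subtle point, is getting the constant and the direction of the inequality $M \le (N^i/E_i)^{1/(i-1)}$ right so that $p^iE_i$ is bounded by a geometrically decaying multiple of $M$; everything else is the standard two-line alteration argument.
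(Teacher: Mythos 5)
Your final argument is correct and is essentially the paper's proof: sample each vertex independently with probability $p$, delete one vertex from each surviving hyperedge, and take $p$ proportional to $M/N$ so that the inequalities $M^{i-1}\le N^i/E_i$ force each term $p^iE_i$ to be a geometrically decaying fraction of $M$ (you use $p=M/(4N)$, the paper uses $p=\tau/(2N)$ with $\tau=M$; both recover the constant $1/16$). One small slip worth noting: your check that $p\le 1$ asserts $M\le N$ ``whenever $E_i\ge 1$,'' but $(N^i/E_i)^{1/(i-1)}\le N$ actually requires $E_i\ge N$, not $E_i\ge 1$ — the paper's proof silently omits this check as well, so this is a shared edge case (immaterial in the paper's application, where the $E_i$ are exponentially large) rather than a defect peculiar to your argument.
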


Before we provide the proof, we derive a GV-type lower bounds for $\lambadd$ codes.

\begin{corollary}\label{cor:gvrate}.
Fix $\delta$ and for $i\in\{2,3,4\}$, 
suppose that $\beta_i = \lim_{n\to \infty} (\log E_i)/n$. Then
\begin{equation}\label{eq:GV-lambadd}
\maxrate_\lambadd(\delta;2) \ge \min\left\{\frac{1}{i-1}(i-\beta_i): i \in \{2,3,4\}\right\}\,.   
\end{equation}
\end{corollary}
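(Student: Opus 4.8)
The plan is to derive Corollary~\ref{cor:gvrate} directly from Theorem~\ref{thm:hypergraphs} by instantiating it with the coding hypergraph $\cG(n,d)$ defined in (G1)--(G3) and taking logarithms. First I would observe that, as noted in the excerpt, any independent set in $\cG(n, \floorenv{\delta n})$ is precisely a code $\cC \subseteq \{0,1\}^n$ with $\dist_\lambadd(\cC; 2) \ge \delta n$, so $\maxcode_\lambadd(n, \floorenv{\delta n}; 2) \ge \sigma(\cG(n, \floorenv{\delta n}))$. The vertex count is $N = 2^n$, so that $\frac1n \log N \to 1$. Applying Theorem~\ref{thm:hypergraphs}, we get
\begin{equation*}
\maxcode_\lambadd(n, \floorenv{\delta n}; 2) \ge \min\left\{\tfrac{1}{16}\left(\tfrac{N^i}{E_i}\right)^{1/(i-1)} : i \in \{2,3,4\}\right\}.
\end{equation*}
Taking $\frac1n\log(\cdot)$ of both sides, the constant $\tfrac1{16}$ contributes $-\tfrac4n \to 0$, the term $\frac1n\log(N^i)^{1/(i-1)} = \frac{i}{i-1}\cdot\frac1n\log N \to \frac{i}{i-1}$, and the term $\frac1n\log(E_i)^{-1/(i-1)} = -\frac{1}{i-1}\cdot\frac1n\log E_i \to -\frac{\beta_i}{i-1}$ by hypothesis. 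Since $\min$ of finitely many functions commutes with limits, combining these yields the bound $\frac{1}{i-1}(i - \beta_i)$ inside the minimum, which is exactly \eqref{eq:GV-lambadd}.

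A couple of points warrant care. The floor in $d = \floorenv{\delta n}$ versus $d = \delta n$ is immaterial asymptotically, but I would remark on it explicitly. More substantively, Theorem~\ref{thm:hypergraphs} and Corollary~\ref{cor:gvrate} are stated with a fixed hypergraph, whereas here we have a family $\{\cG(n, \floorenv{\delta n})\}_n$; the quantities $E_i = E_i(n)$ depend on $n$, and the hypothesis $\beta_i = \lim_{n\to\infty}(\log E_i)/n$ is what makes the limit pass through. I should also note that the $\limsup$ in the definition of $\maxrate_\lambadd(\delta; 2)$ only helps us (we are proving a lower bound on a $\limsup$, so it suffices to exhibit the stated liminf behavior of the lower bounds), so no subtlety arises there.

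The only genuine obstacle is that Corollary~\ref{cor:gvrate} as written presupposes the existence of the limits $\beta_i$; for it to be a usable bound one must separately verify that the edge counts $E_i(n)$ of $\cG(n, \floorenv{\delta n})$ actually have exponential growth rates, and compute them. That computation — counting, e.g., the number of $4$-subsets $\{\bfa,\bfb,\bfc,\bfd\}$ with $\norm{(\bfa\boxplus_\lambda\bfb) - (\bfc\boxplus_\lambda\bfd)} < \delta n$ for one of the three pairings — is where the real work (and presumably the ACSV machinery advertised in the introduction) lies, but it is logically downstream of this corollary and I would relegate it to the subsequent development rather than the proof of the corollary itself. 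So the proof of Corollary~\ref{cor:gvrate} proper is a short two-line limit computation on top of Theorem~\ref{thm:hypergraphs}; I would present it as such, flagging that the evaluation of the $\beta_i$ is carried out afterward.
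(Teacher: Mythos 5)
Your derivation is correct and is precisely the (unwritten) deduction the paper intends: Corollary~\ref{cor:gvrate} follows by applying Theorem~\ref{thm:hypergraphs} to the hypergraph $\cG(n,\floorenv{\delta n})$ with $N=2^n$, using the observation that independent sets are $\lambadd$-codes, and then taking $\tfrac1n\log$ of the bound $\sigma(\cG)\ge\min_i\tfrac1{16}(N^i/E_i)^{1/(i-1)}$ as $n\to\infty$. The paper states the corollary without proof immediately before proving Theorem~\ref{thm:hypergraphs}, and your remarks on the floor, the $n$-dependence of $E_i$, and the $\limsup$ are exactly the right points to flag.
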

}

{%\color{red}
\begin{proof}[Proof of Theorem~\ref{thm:hypergraphs}]
We randomly and independently pick vertices from $\cV(n,d)$ with probability $p$ and put them into a set $S$. We determine the value of $p$ later.

First, we describe how to construct an independent set from $S$. For each hyperedge of size $i$ in the induced subgraph $\cG|_S$, we remove one vertex. After removing these vertices, it is clear that the remaining vertices form an independent set.

In what follows, we determine a lower bound on the number of surviving vertices.
To this end, we consider the following random variables (r.v.). 
Set the r.v. $X$ be the number of the vertices in $S$.
For $i\in\{2,3,4\}$, we set the r.v. $Y_i$ to be the number of hyperedges of size $i$ in the induced subgraph $\cG|_S$. Then, using linearity of expectations, we have that $E[X] = Np$, while $E[Y_i] = p^i E_i$.
Therefore, the expected number of surviving nodes is at least 
\begin{equation}\label{eq:hypergraph1}
Np - p^2 E_2 - p^3 E_3 - p^4 E_4. 
\end{equation}
Here, we choose $p$ to minimize \eqref{eq:hypergraph1}.
If we choose $p = \tau/(2N)$, then \eqref{eq:hypergraph1} becomes:
\begin{equation}\label{eq:hypergraph2}
   \frac{\tau}{2} - \frac{\tau}{4}\left( \frac{\tau E_2}{N^2}\right) - \frac{\tau}{8} \left(\frac{\tau^2 E_3}{N^3}\right) - \frac{\tau}{16} \left(\frac{\tau^3 E_4}{N^4}\right).
\end{equation}
For all $i\in\{2,3,4\}$, since $\tau\le (N^i/E_i)^{1/(i-1)}$, we have that $\tau^{i-1} E_i\le N^i$.
Thus, \eqref{eq:hypergraph2} is least
\begin{equation*}
    \frac{\tau}{2}\left(1-\frac 12-\frac14 -\frac18\right)=\frac{\tau}{16}.
\end{equation*}
Hence, $p$ can always be chosen so that \eqref{eq:hypergraph1} is at least $\tau/16$.
In other words, there is an independent of size at least $\tau/16$.
\begin{comment}
Therefore, the expected number of surviving nodes is at least 
\begin{equation}\label{eq:hypergraph1}
Np - p^2 E_2 - 2p^3 E_3 - 3p^4 E_4. 
\end{equation}
Here, we choose $p$ according to the value that minimizes \eqref{eq:GVsize}.
Specifically, $p = 3 \sigma(\cG)/N$. Hence, \eqref{eq:hypergraph1} becomes:
\begin{equation}\label{eq:hypergraph2}
    3 \sigma(\cG) - 9 \sigma(\cG) \frac{\sigma(\cG) E_2}{N^2} - 54 \sigma(\cG) \frac{\sigma(\cG)^2 E_3}{N^3} - 243 \sigma(\cG) \frac{\sigma(\cG)^3 E_4}{N^4}.
\end{equation}
From \eqref{eq:GVsize}, we obtain:
\begin{equation*}
    \frac{N^2}{E_2} \ge 9\sigma(\cG), \frac{N^3}{E^3} \ge 81 \sigma(\cG)^2 \text{ and } \frac{N^4}{E_4} \ge 729 \sigma(\cG)^3.
\end{equation*}
We further simplify \eqref{eq:hypergraph2} into the following quantity:
\begin{equation*}
    3 \sigma(\cG) - \sigma(\cG) - 2/3 \sigma(\cG) - 1/3 \sigma(\cG) = \sigma(\cG).
\end{equation*}
Hence, $p$ can always be chosen so that \eqref{eq:hypergraph1} is at least $\sigma(\cG)$.
\end{comment}
\end{proof}

Therefore, to provide a GV-type lower bound, we enumerate $E_i$ and determine $\beta_i=\lim_{n\to\infty} (\log E_i)/n$ for $i\in\{2,3,4\}$.
To this end, we borrow tools from analytic combinatorics in several variables (see the text~\cite{melczer2021invitation}).
%As the tools are rather involved and technical, and due to space constraints, we defer the detailed derivations to~\cite{ARXIV}. 
As the tools are rather involved and technical, and due to space constraints, we defer the detailed derivations to~Appendix~\ref{app:hyperedge}.

Here, we state the results.

\begin{theorem}\label{thm:lamb-GV}
Let $0\le \delta\le 1/2$ and $\lambda=a/b$ with $a$ and $b$ integers. 
Set $\beta_2 = \entropy(\delta) + 1$, where $\entropy$ is the usual binary entropy function. Set 
\begin{align*}
H_3(z,u)        &= 1 - 3z  -  zu^a - 3zu^b -  zu^{a-b},\\
G_{(3,1)}(z, u) &= 1 - z   -  zu^a  -zu^b  - zu^{a-b},\\
G_{(3,2)}(z, u) &= 1 - 2z           -zu^b  - zu^{a-b},\\
H_4(z,u)        &= 1 - 6z  - 2zu^a - 4zu^b - 4zu^{a-b},\\
G_{(4,1)}(z, u) &= 1 - 2z  - 2zu^a -2zu^b -  2zu^{a-b},\\
G_{(4,2)}(z, u) &= 1 - 4z  - 2zu^a         -  zu^{a-b}.
\end{align*}
For $i\in\{3,4\}$, let $(z_i^*,u_i^*)$ be the unique solution with $(z_i^*,u_i^*)>0$ satisfying
\[ H_i(z,u) = 0 \text{ and } u\frac{\partial}{\partial u}H_i(z,u) =   b\delta z\frac{\partial}{\partial z}H_i(z,u)\, \]
\noindent and set $\beta_i = - \log z_i^* - b\delta \log u_i^*$\,.
For $(i,j)\in\{(3,1), (3,2)$, $(4,1), (4,2)\}$, we similarly define 
$\left(z_{(i,j)}^*,u_{(i,j)}^*\right)$ to be the unique solution with $\left(z_{(i,j)}^*,u_{(i,j)}^*\right)>0$ satisfying
\[ G_{(i,j)}(z,u) = 0 \text{ and } u\frac{\partial}{\partial u}G_{(i,j)}(z,u) =   b\delta z\frac{\partial}{\partial z}G_{(i,j)}(z,u)\, \]
\noindent and set $\gamma_{(i,j)} = - \log z_{(i,j)}^* - b\delta \log u_{(i,j)}^*$\,.

If $\beta_i> \gamma_{(i,j)}$ for all $(i,j)\in\{(3,1), (3,2), (4,1), (4,2)\}$, then \eqref{eq:GV-lambadd} holds.
\end{theorem}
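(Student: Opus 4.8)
The plan is to reduce the theorem to Corollary~\ref{cor:gvrate} by showing that, for $d=\floorenv{\delta n}$ and each $i\in\{2,3,4\}$, the number $E_i$ of hyperedges of size $i$ in $\cG(n,d)$ obeys $\lim_{n\to\infty}\frac1n\log E_i=\beta_i$ with $\beta_i$ as in the statement. The case $i=2$ is immediate: by translation invariance of the Hamming metric, $E_2=\tfrac12\,2^n\,\bigl|\{\bfx\in\{0,1\}^n:\mathrm{wt}(\bfx)<\delta n\}\bigr|$, whose exponential order is $2^{(1+\entropy(\delta))n}$, so $\beta_2=\entropy(\delta)+1$.

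For $i\in\{3,4\}$ I would build a bivariate generating function. Fix one of the finitely many symmetric defining inequalities --- say $\norm{(\bfa\boxplus_\lambda\bfb)-\bfc}<d$ for $i=3$ and $\norm{(\bfa\boxplus_\lambda\bfb)-(\bfc\boxplus_\lambda\bfd)}<d$ for $i=4$ --- and examine the $\ell_1$-discrepancy coordinatewise. Each coordinate contributes, independently of the others, a value in $\{0,1,\lambda,\lambda-1\}$; writing $\lambda=a/b$ and rescaling by $b$ turns these into integers $\{0,b,a,a-b\}$, and counting the bit-patterns realizing each value yields exactly the per-coordinate polynomial whose $n$-th power is generated by $1/H_i(z,u)$, with $z$ marking the length and $u$ marking the $b$-rescaled discrepancy. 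Hence the number of ordered triples (resp.\ $4$-tuples) satisfying that inequality equals $\sum_{m<b\delta n}[z^nu^m]\,\tfrac1{H_i(z,u)}$. Because all coefficients are nonnegative, the exponential order of this partial sum equals that of its largest term, and since $\delta\le 1/2$ sits below the ``typical'' per-coordinate discrepancy of $H_i$ (namely $(\lambda+1)/4$ for $i=3$ and $3\lambda/8$ for $i=4$), that term occurs near $m\approx b\delta n$. I would then invoke the smooth-point theory of ACSV (as in~\cite{melczer2021invitation}): $\{H_i=0\}$ is smooth near the positive orthant; maximizing $zu^{b\delta}$ over $\{H_i=0\}\cap\RR_{>0}^2$ has a unique critical point $(z_i^*,u_i^*)>0$ --- it is the solution of $H_i=0$ and $u\,\partial_uH_i=b\delta\,z\,\partial_zH_i$, the stated Lagrange condition, with uniqueness coming from the fact that $zu^{b\delta}$ is unimodal along the curve and tends to $0$ at both ends --- and this point is minimal, so the transfer theorem gives $[z^nu^m]\tfrac1{H_i}=\Theta\!\bigl(n^{-1/2}(z_i^*)^{-n}(u_i^*)^{-m}\bigr)$ as $m/n\to b\delta$. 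Thus the count has exponential order $2^{\beta_i n}$ with $\beta_i=-\log z_i^*-b\delta\log u_i^*$.

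It remains to pass from ``ordered tuples meeting one inequality'' to ``size-$i$ hyperedges''. Summing over the constantly many symmetric inequalities and dividing out the orderings only changes the count by a constant factor, so the sole subtlety is that a genuine hyperedge needs $i$ distinct vertices, while the generating function also counts degenerate tuples in which some of $\bfa,\bfb,\bfc$ (resp.\ $\bfa,\bfb,\bfc,\bfd$) coincide. Setting two of the vectors equal again gives a coordinatewise-independent count, now governed by $1/G_{(i,j)}(z,u)$ for the appropriate polynomial from the list; the same ACSV argument yields its exponential order $2^{\gamma_{(i,j)}n}$ with $\gamma_{(i,j)}=-\log z_{(i,j)}^*-b\delta\log u_{(i,j)}^*$. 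Under the hypothesis $\beta_i>\gamma_{(i,j)}$ for every $j$, the degenerate tuples are exponentially negligible, so $E_i=2^{(\beta_i+o(1))n}$ and \eqref{eq:GV-lambadd} follows from Corollary~\ref{cor:gvrate}.

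The main obstacle is the ACSV step: one must check that $(z_i^*,u_i^*)$ is in fact the contributing singularity --- smooth, strictly minimal, and a nondegenerate critical point (nonsingular Hessian of the height function, so that the standard transfer theorem applies and no other part of the variety contributes) --- and that the critical-point system has a unique solution in the positive orthant. This is exactly where the rationality of $\lambda$ is needed, since it makes $1/H_i$ an honest bivariate rational series with integer exponents, and where the restriction $\delta\le 1/2$ must be used to keep the critical point in the regime $u_i^*\le 1$ where the partial-sum asymptotics are driven by the boundary weight $m\approx b\delta n$ rather than by the bulk. A secondary, bookkeeping-heavy point is the exhaustive enumeration of degenerate cases and the verification that the $G_{(i,j)}$ listed are precisely those that arise; for the growth rate alone one could in principle replace the ACSV by a Cram\'er-type large-deviations estimate, but the critical-point formulation is the cleanest route and matches the claimed formulas.
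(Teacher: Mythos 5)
Your proposal is correct and follows essentially the same route as the paper: reduce to Corollary~\ref{cor:gvrate}; compute $\beta_2$ directly; for $i\in\{3,4\}$ exploit coordinatewise independence of the rescaled discrepancy (taking values in $\{0,b,a,a-b\}$) to obtain the bivariate rational generating functions with denominators $H_i$; apply smooth-point ACSV to read off the exponential rates; and use the hypothesis $\beta_i>\gamma_{(i,j)}$ to argue that degenerate (non-distinct) tuples, counted by the $G_{(i,j)}$, are exponentially negligible. The paper packages the degenerate-tuple accounting into explicit sandwich inequalities (Lemmas~\ref{lem:est-1}--\ref{lem:est-2}) and writes the recursions rather than the per-coordinate polynomial directly, but these are cosmetic differences from what you describe.
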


\begin{remark}\hfill
\begin{itemize}
\item When the hyperedges of $\cG$ are all of size two, Theroem~\ref{thm:hypergraphs} yields a weaker version of the generalized GV bound~\cite{tolhuizen1997generalized}. As pointed out in the same paper, a better bound can be obtained by {\em Tur\'an's theorem}~\cite[Thm 3.2.1]{alon2016probabilistic}. Extensions of Tur\'an's theorem for general hypergraphs later derived in~\cite{caro1991improved,thiele1999lower,csaba2012note}. However, the determination of asymptotic rates arising from these improvements ``seems in general a hopeless task'' (see discussion in Section~3 of~\cite{tolhuizen2011generalisation}).
\item When $\delta=0$ and $\lambda=2$, we recover the random coding bound in \cite[Sect.~5.1.2, Thm.~3]{dyachkov2014lectures} for $s=2$.
%\item \hm{comment that we need not only take one.}
\end{itemize}
\end{remark}

\subsection{Comparison of Lower Bounds for $\lambadd$-Codes}
\label{sec:comparison}

We compare the lower bounds that we obtained in this paper.
Specifically, for illustrative purposes, we fix $\lambda=4/3$ and plot the following curves in Figure~\ref{fig:GVplot}.
\begin{itemize}
\item First, we set $a=4$ and $b=3$ and construct a $\lambadd$-codes with $\lambda=4/3$. Then the GV-type bound given by Theorem~\ref{thm:lamb-GV} provides a lower bound for $\mu_\lambadd$.
\item Next, we consider a lower bound obtained by constructing an $\add$-code. Here, we set $a=2$ and $b=1$ and we determine the GV-type bound given by Theorem~\ref{thm:lamb-GV}. Then it follows from Proposition~\ref{prop:lamb-others} that we have $\maxrate_\lambadd(\delta;2)\ge \maxrate_\add(3\delta;2)$.
\item Next, we consider a lower bound obtained by constructing an $\xor$-code. In other words, we use Proposition~\ref{prop:lamb-others} that states $\maxrate_\lambadd(\delta;2)\ge \maxrate_\xor(3\delta;2)$. 
From Theorem~\ref{thm:main-xor}, we have that $\maxrate_\xor(3\delta;2)\ge \frac12\alpha^*(3\delta)$. 
Since the Gilbert-Varshamov (GV) bound for linear codes~\cite{varshamov1957estimate} states that $\alpha^*(\delta)\ge 1-\entropy(\delta)$, we have that $\maxrate_\xor(3\delta;2)\ge \frac12 (1-\entropy(3\delta))$ and we plot the corresponding curve in Figure~\ref{fig:GVplot}.
%Hence, we plot the curve according to~\eqref{eq:alpha-lamb}.
\item Finally, we consider a lower bound obtained by constructing an $\orc$-code. In other words, we use Proposition~\ref{prop:lamb-others} that states $\maxrate_\lambda(\delta;2)\ge \maxrate_\orc(\delta;2)$.
So, here, we plot the curve according to Proposition~\ref{prop:lower-or}.
\end{itemize}

Unsurprisingly, $\orc$-codes exhibit the lowest rates due to their challenging construction.
For most values of $\delta$, we observe that applying the GV analysis directly to $\lambda=4/3$ yields significantly better rates compared to an indirect application to $\add$-codes. 
This serves as justification for introducing the class of $\lambadd$-codes. 
Finally, when $\delta$ is small, the rates obtained from using $\xor$-codes are slightly higher.  Our future work involves investigating whether this observation can be leveraged to improve the probabilistic construction method outlined in Theorem~\ref{thm:hypergraphs}.

%\hm{
%\begin{itemize}
%\item Sorry, notations are not properly defined. To rewrite when notations are all fixed.
%\item We can corroborate the GV-type bound / random coding arguments by computer simulations. %Unfortunately, I have difficulties conducting an extensive simulation.
%\end{itemize}
%}

\begin{figure}
    \centering
    %\hm{to amend the legend and include the point for $\mu_\orc(0;2)$}
    \includegraphics[width=9cm]{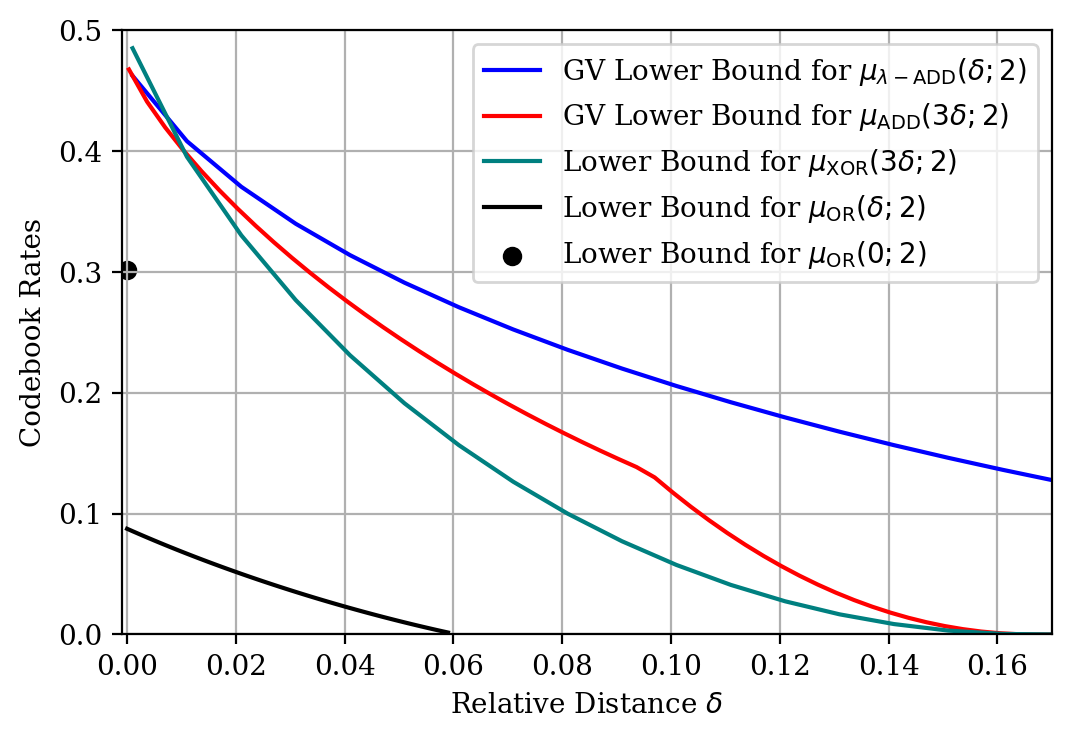}
    \vspace{-5mm}
    
    %\caption{Random Coding for $\lambda$-ADD distance versus ADD-distance}
    \caption{Comparison of lower bounds for $\mu_\lambadd(\delta;2)$. Note that from Proposition~\ref{prop:lamb-others}, we have that  $\maxrate_\lambadd(\delta;2)\ge \maxrate_\add(3\delta;2)\ge \maxrate_\xor(3\delta;2)$ and $\maxrate_\lambadd(\delta;2)\ge\maxrate_\orc(\delta;2)$. 
    \vspace{-5mm}
    }
    \label{fig:GVplot}
\end{figure}

\subsection{Upper Bound for $\add$-Codes}

For completeness, we derive upper bounds for the code sizes. 
Here, we keep our exposition simple by setting $\lambda=2$. In other words, we study $\add$-codes.

\begin{proposition}\label{eq:upper-lamb}
Let $0\le d\le n$. Let $A_3(n,d)$ be the maximum size of a ternary code of length $n$ and $\ell_1$-distance $d$. 
Then,
\begin{equation}\label{upper:lamb}
\maxcode_\add(n,d;2) \le  \Big(2 A_3(n,d)\Big)^{1/2}\,.
\end{equation}
\end{proposition}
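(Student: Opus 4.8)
The plan is to mimic the upper-bound proof for $\xor$-codes (Theorem~\ref{thm:main-xor}): bound the number of distinct pairwise sums in terms of a known code-size quantity, then invert. Let $M=\maxcode_\add(n,d;2)$ and let $\cC\subseteq\{0,1\}^n$ be an $\add$-code achieving this with $\dist_\add(\cC;2)\ge d$. First I would consider the image set $\cC_\add=\{\,\add(S) : S\in\binom{\cC}{\le 2}\,\}$, i.e. all single codewords $\bfx$ (from $|S|=1$) together with all real sums $\bfx+\bfy$ with $\bfx\ne\bfy$ in $\cC$. Every element of $\cC_\add$ lies in $\{0,1,2\}^n$, so it is a ternary word of length $n$; moreover, because distances are computed componentwise and the codomain is $\{0,1,2\}^n$, the $\add$-distance of $\cC$ equals the $\ell_1$-distance of the set $\cC_\add$. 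Hence $\cC_\add$ is a ternary code of length $n$ and minimum $\ell_1$-distance $\ge d$, giving $|\cC_\add|\le A_3(n,d)$.

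Second, I would lower-bound $|\cC_\add|$ in terms of $M$. The key observation — exactly the feature that distinguishes $\add$ from $\xor$ — is that the map $S\mapsto\add(S)$ restricted to the unordered pairs $\binom{\cC}{2}$ is injective: from a sum $\bfx+\bfy\in\{0,1,2\}^n$ one recovers $\{\bfx,\bfy\}$ (the $2$-coordinates are the common ones, the $1$-coordinates split into the two symmetric-difference halves; and since $S_1\ne S_2\Rightarrow \add(S_1)\ne \add(S_2)$ is part of the distance hypothesis with $d\ge1$, no collisions occur). So $|\cC_\add|\ge \binom{M}{2}$, counting just the two-element subsets. Therefore $\binom{M}{2}\le A_3(n,d)$, i.e. $M(M-1)\le 2A_3(n,d)$, which gives $M^2\le M(M-1)+M \le 2A_3(n,d)+M$; to get the clean bound $M\le (2A_3(n,d))^{1/2}$ as stated I would instead note $M-1\ge$ (something) — more carefully, $\binom{M}{2}=\frac{M(M-1)}{2}\ge \frac{(M-1)^2}{2}$ is the wrong direction, so the honest conclusion is $M(M-1)\le 2A_3(n,d)$, hence $M\le \tfrac12+\sqrt{\tfrac14+2A_3(n,d)} \le (2A_3(n,d))^{1/2}+1$; the displayed inequality \eqref{upper:lamb} should be read up to this additive constant, or one uses $\binom{M}{2}\ge (M-1)^2/2$... actually $\binom{M}{2}\ge (M-1)^2/2$ is false in general, so I would present the bound in the form $\maxcode_\add(n,d;2)(\maxcode_\add(n,d;2)-1)\le 2A_3(n,d)$ and then remark that this yields $\maxcode_\add(n,d;2)\le (2A_3(n,d))^{1/2}+1$, matching \eqref{upper:lamb} asymptotically (in the same spirit as the $+s-1$ slack in \eqref{eq:main-xor}).

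The main obstacle — and it is a mild one — is the injectivity/recoverability step: one must be careful that a ternary word $\bfw\in\{0,1,2\}^n$ arising as $\bfx+\bfy$ determines $\{\bfx,\bfy\}$ uniquely. On coordinates where $\bfw_i=2$ both $\bfx_i=\bfy_i=1$; where $\bfw_i=0$ both are $0$; the ambiguity is only on coordinates with $\bfw_i=1$, where $\{\bfx_i,\bfy_i\}=\{0,1\}$, and there are two ways to assign these globally — but these two ways give the same unordered pair $\{\bfx,\bfy\}$. So recoverability of the \emph{set} holds unconditionally, and distinctness of images for distinct subsets of size $\le 2$ is exactly $\dist_\add(\cC;2)\ge d\ge 1>0$. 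Assembling: $\binom{M}{2}\le |\cC_\add|\le A_3(n,d)$, and inverting gives the claim. I would also remark that, just as in Theorem~\ref{thm:main-xor}, one could include the singletons to write $M+\binom{M}{2}\le A_3(n,d)+$ (overlap terms), but the two-subset count alone already suffices for the stated bound.
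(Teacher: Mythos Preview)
Your approach is the paper's, but you make the last step harder than needed and stop short of the exact inequality. The paper also forms $\cC_\add=\{\add(S):S\in\binom{\cC}{\le 2}\}$ and bounds $|\cC_\add|\le A_3(n,d)$; the only difference is that it \emph{keeps the singletons}. Your worry about ``overlap terms'' is unfounded: the hypothesis $\dist_\add(\cC;2)\ge d$ already forces $\add(S_1)\ne\add(S_2)$ for every pair of distinct $S_1,S_2\in\binom{\cC}{\le2}$, including a singleton against a $2$-subset. Hence
\[
|\cC_\add|=M+\binom{M}{2}=\frac{M(M+1)}{2}\le A_3(n,d),
\]
so $M^2\le M(M+1)\le 2A_3(n,d)$, which is exactly \eqref{upper:lamb}. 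Counting only $2$-subsets, as you do, gives $M(M-1)\le 2A_3(n,d)$, and this genuinely does \emph{not} imply $M\le(2A_3(n,d))^{1/2}$ (e.g.\ $A_3=3$ permits $M=3>\sqrt6$), so the additive slack you introduce is real but entirely avoidable.

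A smaller point: the recoverability discussion (reconstructing $\{\bfx,\bfy\}$ from $\bfx+\bfy$) is unnecessary. Injectivity of $S\mapsto\add(S)$ on $\binom{\cC}{\le2}$ is precisely the distance hypothesis, not a separate combinatorial fact about sums of binary vectors.
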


\begin{proof}
For brevity, we set $\maxcode_\add(n,d)$ to be $M$, and suppose that $\cC$ is a code with $d_\add(\cC)\ge d$ with $|\cC|=M$.
Consider the code $\cC_\add = \{ \add(S)~:~S\in \binom{\cC}{\le 2}\}$.
Now, the image $\add(S)$ belongs $\{0,1,2\}^n$, we have that $\add(\cC)$ a ternary code of length $n$ and $\ell_1$-distance $d$. In other words, the size of $\cC_\add$ is at most $A_3(n,d)$, and so, $M+\binom{M}{2}\le A_3(n,d)$. Then~\eqref{eq:upper-lamb} follows from direct manipulations.
\end{proof}

Next, we consider the regime where $d$ is constant with respect to $n$.
Here, we set $t=\floorenv{(d-1)/2}$.
Since it is easy to show that $A_3(n,d) = O(3^n/n^t)$ %\hm{Is this well-known?} \ey{I am not sure it is well-known, but we can claim it is not hard to show}, 
we have %that 
\[\rho_\add(n,d) \ge \left(1-\frac{\log 3}{2}\right) n + \frac t2\log n - O(1) \approx 0.208 n \,. \]
%\ey{add parameters to $\rho_\add$?}
On the other hand, from Proposition~2, we have that $\rho_\add(n,d)\le \rho_\xor(n,d)$. Then applying Corollary~\ref{cor:xor}, we have that 
\[\rho_\add(n,d) \le  \frac12 n + \frac t2\log n - o(\log n) \approx 0.5 n\,. \]

Unfortunately, we have a significant gap. %and this is deferred to future study.

%\hm{We need a better ending here. I don't really like to end on such an observation.}

%\hm{spacefiller}

\section{Conclusion}

Motivated by applications in spatial genomics, we revisit group testing and the SQGT framework in the context of the $\ell_1$-distance.
We propose the class of $\lambadd$-codes and construct $\lambadd$-codes with $d>1$.

In the process of constructing these codes, we study $\xor$-codes and provide sharp estimates for the rates of such codes in terms of classical parameters (see Theorem~\ref{thm:main-xor}). 
Furthermore, employing $\xor$-codes, we construct explicit families of $\lambadd$-codes that achieve higher rates than any known construction (see Corollary~\ref{cor:xor}).
When $d$ grows linearly with the codelength, we provide a GV-type lower bound (see Theorem~\ref{thm:lamb-GV}) that can be efficiently computed.

We vary the relative distance $\delta$ and compare the various constructions of $\lambadd$-codes in Figure~\ref{fig:GVplot}. 
We see that the GV-lower bound generally provides the best estimates, but curiously, the construction for $\xor$-codes performs well when $\delta$ is small.

\section{Acknowledgement}
The work of Han Mao Kiah was supported by the Ministry of Education, Singapore, under its MOE AcRF Tier~2 Award under Grant MOE-T2EP20121-0007 and MOE AcRF Tier~1 Award under Grant RG19/23. The work of Van Long Phuoc Pham was supported by the Ministry of Education, Singapore, under its MOE AcRF Tier~1 Award under Grant RG19/23.
The research of Eitan Yaakobi was Funded by the European Union (ERC, DNAStorage, 865630). Views and opinions expressed are however those of the author(s) only and do not necessarily reflect those of the European Union or the European Research Council Executive Agency. Neither the European Union nor the granting authority can be held responsible for them. 

\newpage

\bibliographystyle{IEEEtran}
 % camera-ready for ISIT 2024

\bibliography{references}

\newpage
%\phantom{HACK TO FORCE NEW PAGE}

%\newpage

\appendices

\section{Proof of Proposition~\ref{prop:lower-or}}\label{app:lower-or}

In this appendix, we provide a detailed derivation of Proposition~\ref{prop:lower-or}.
To this end, we introduce the class of combinatorial {\em packing}.

\begin{definition}\label{def:packing}
A {\em $t$-$(v,k,1)$-packing} is a pair $(V,\cB)$, where 
$V$ is a set of $v$ {\em vertices} and $\cB$ is a collection of subsets of $V$ or {\em blocks} such that the following hold.
\begin{enumerate}[(i)]
    \item For all $b\in\cB$, the size of $b$ is $k$.
    \item Any $t$-subset of $V$ is contained in at most one block in $\cB$.
\end{enumerate}
The {\em size} of a packing refers to the number of blocks in $\cB$.
\end{definition}

We have the following construction of $\orc$-codes from combinatorial packings.

\begin{proposition}[{\cite[Cor.~8.3.3]{du2000combinatorial}}]
\label{prop:or-packing}
    If there exists a $t$-$(v,k,1)$-packing of size $M$, 
    then we have code $\cC$ of length $v$ and size $M$ with $\dist_{\orc}(\cC)\ge k-s(t-1)$.
\end{proposition}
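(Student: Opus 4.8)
\textbf{Proof plan for Proposition~\ref{prop:or-packing}.}

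The plan is to show directly that the code $\cC$ consisting of the characteristic vectors of the blocks of the $t$-$(v,k,1)$-packing $(V,\cB)$ has the stated $\orc$-distance. First I would fix the setup: identify each block $b\in\cB$ with its indicator vector $\bfx_b\in\{0,1\}^v$, so $\cC=\{\bfx_b:b\in\cB\}$ has length $v$ and size $M$. Recall that for a subset $S=\{\bfx_{b_1},\dots,\bfx_{b_{s'}}\}\in\binom{\cC}{\le s}$ the Boolean sum $\orc(S)$ is exactly the indicator vector of the union $b_1\cup\cdots\cup b_{s'}$. Hence the whole problem is translated into a purely set-theoretic statement about unions of at most $s$ blocks.

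Next I would estimate $\norm{\orc(S_1)-\orc(S_2)}$ for two distinct subsets $S_1,S_2\in\binom{\cC}{\le s}$, where $\norm{\cdot}$ is the $\ell_1$ (equivalently Hamming) distance between the two $0/1$ union-vectors. Writing $U_1=\bigcup_{b\in S_1} b$ and $U_2=\bigcup_{b\in S_2} b$, this distance equals $|U_1\triangle U_2|=|U_1\setminus U_2|+|U_2\setminus U_1|$. The key step is to lower bound $|U_1\setminus U_2|$ (and symmetrically the other term) whenever $S_1\ne S_2$: pick a block $b\in S_1\setminus S_2$ (WLOG such a block exists after possibly swapping $S_1,S_2$). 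Since $|b|=k$, I want to show that at least $k-s(t-1)$ elements of $b$ lie outside $U_2$. An element of $b$ can be ``killed'' only by lying in some block $b'\in S_2$; but $b$ and $b'$ are distinct blocks, so by the packing property (any $t$-subset of $V$ lies in at most one block) we have $|b\cap b'|\le t-1$. Since $|S_2|\le s$, the union $b\cap U_2=\bigcup_{b'\in S_2}(b\cap b')$ has size at most $s(t-1)$, and therefore $|b\setminus U_2|\ge k-s(t-1)$. This forces $\norm{\orc(S_1)-\orc(S_2)}\ge |U_1\setminus U_2|\ge |b\setminus U_2|\ge k-s(t-1)$, which is exactly the claimed bound; taking the minimum over all distinct pairs gives $\dist_\orc(\cC)\ge k-s(t-1)$.

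The main obstacle — really the only subtlety — is justifying that a block $b\in S_1\setminus S_2$ exists: this needs the observation that if $S_1\ne S_2$ as \emph{sets of codewords}, then one of $S_1\setminus S_2$ or $S_2\setminus S_1$ is nonempty, and we simply relabel so that the first one is. (One should also note the bound is only meaningful when $k>s(t-1)$; otherwise the statement is vacuous.) Everything else is bookkeeping: translating the Boolean sum to a union, the symmetric-difference identity for $\ell_1$-distance of indicator vectors, and the two-line application of the defining packing property $|b\cap b'|\le t-1$ for distinct blocks. I would close by remarking that this is precisely \cite[Cor.~8.3.3]{du2000combinatorial} specialized to $\lambda=1$ packings, and then feed the resulting code-size estimates (via asymptotically optimal packing sizes) into the computation of $\maxrate_\orc(\delta;s)$ in the remainder of the appendix.
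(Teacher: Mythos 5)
Your proof is correct, and it is the standard argument behind \cite[Cor.~8.3.3]{du2000combinatorial}. Note, however, that the paper itself does not supply a proof of Proposition~\ref{prop:or-packing}: it states the result with a citation and immediately moves on to combine it with Theorem~\ref{thm:EFF} to derive Proposition~\ref{prop:lower-or}. So there is no ``paper's proof'' to compare against; what you have done is fill in the omitted proof, and you have done so faithfully.

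Your key chain of inequalities is sound: identifying $\orc(S_i)$ with the indicator of $U_i=\bigcup_{b\in S_i}b$, picking $b\in S_1\setminus S_2$ (after relabelling), using $|b\cap b'|\le t-1$ for every $b'\in S_2$ (valid because $b\neq b'$), and concluding $|b\setminus U_2|\ge k-|S_2|(t-1)\ge k-s(t-1)$, hence $\norm{\orc(S_1)-\orc(S_2)}=|U_1\triangle U_2|\ge|U_1\setminus U_2|\ge|b\setminus U_2|$. Two small points you could tighten in a final write-up: (i) the $M$ indicator vectors are pairwise distinct because a $t$-$(v,k,1)$-packing with $k\ge t$ cannot contain two equal blocks (a $t$-subset would lie in both), so $|\cC|=M$ as claimed; (ii) you actually prove the slightly stronger bound $k-|S_2|(t-1)$, which collapses to the stated bound since $|S_2|\le s$ — worth a sentence if you want to flag that the estimate is tight only when both $|S_1|$ and $|S_2|$ are close to $s$. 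Otherwise this is exactly the textbook argument and slots cleanly into the derivation of Proposition~\ref{prop:lower-or}.
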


Furthermore, we have the following existential lower bound for $t$-$(v,k,1)$-packings due to Erd\"os, Frankel and F\"uredi.

\begin{theorem}[{see \cite[Thm.~7.3.8]{du2000combinatorial}}]
\label{thm:EFF}
There exists a $t$-$(v,k,1)$-packing of size at least $\binom{n}{t-1}/\binom{k}{t-1}^2$.
\end{theorem}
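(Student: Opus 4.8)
The plan is to recast the statement as a statement about set families and then run a short greedy argument — the standard fact that a \emph{maximal} packing already ``covers'' the whole ground set. Identify $V$ with $[v]$ (I write $v$ for the number of vertices). A $t$-$(v,k,1)$-packing is then a family $\cB$ of $k$-subsets of $[v]$ in which every $t$-subset of $[v]$ lies in at most one block, equivalently $|b\cap b'|\le t-1$ for all distinct $b,b'\in\cB$. We may assume $t\le k\le v$, the claim being vacuous otherwise. I would take $\cB$ to be a maximal such family — one to which no further $k$-subset can be added without destroying the packing property — and put $M=|\cB|$; such a $\cB$ exists because there are finitely many $k$-subsets, and it suffices to prove $M\ge\binom{v}{t-1}/\binom{k}{t-1}^2$.

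The steps are as follows. First, \emph{maximality forces $(t-1)$-coverage}: for any $k$-subset $b'$ of $[v]$, either $b'\in\cB$, or $\cB\cup\{b'\}$ fails to be a packing, i.e.\ some $t$-subset $T\subseteq b'$ already lies in a block $b\in\cB$; in either case $b'$ shares some common $(t-1)$-subset with a member of $\cB$ (take a $(t-1)$-subset of $T$, respectively of $b'$ itself). Hence every $k$-subset of $[v]$ shares a $(t-1)$-subset with at least one block of $\cB$. Second, \emph{count the coverage of one block}: fixing $b\in\cB$, every $k$-subset sharing a $(t-1)$-subset with $b$ is obtained by choosing a $(t-1)$-subset $T'\subseteq b$ and then $k-t+1$ further elements from $[v]\setminus T'$, so there are at most $\binom{k}{t-1}\binom{v-t+1}{k-t+1}$ of them (with repetition when the overlap exceeds $t-1$, which only helps). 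Combining these via a union bound,
\[
\binom{v}{k}\;\le\;\sum_{b\in\cB}\binom{k}{t-1}\binom{v-t+1}{k-t+1}\;=\;M\,\binom{k}{t-1}\binom{v-t+1}{k-t+1}.
\]
Finally, the identity $\binom{v}{k}\binom{k}{t-1}=\binom{v}{t-1}\binom{v-t+1}{k-t+1}$ (count pairs consisting of a $k$-subset and a $(t-1)$-subset of it, in the two orders) rearranges the above to
\[
M\;\ge\;\frac{\binom{v}{k}}{\binom{k}{t-1}\binom{v-t+1}{k-t+1}}\;=\;\frac{\binom{v}{t-1}}{\binom{k}{t-1}^2}\,,
\]
which is exactly the claimed bound.

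I do not expect a genuine obstacle here: the entire content is the observation that the blocks of a maximal packing must $(t-1)$-cover every $k$-set, together with the (deliberate) choice to do the counting in the previous paragraph at the level of $(t-1)$-subsets rather than $t$-subsets, which is what makes the estimate land precisely on $\binom{v}{t-1}/\binom{k}{t-1}^2$. One could instead use the alteration/deletion method, exactly as in the proof of Theorem~\ref{thm:hypergraphs} — pick each $k$-subset independently with probability $p$, delete one block from every pair of chosen blocks that meet in $\ge t$ points, and optimize $p$ — but this is slightly lossier in the constants and more naturally yields a bound of the shape $\binom{v}{t}/\binom{k}{t}^2$, so I would present the maximal-packing argument as the main proof. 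It is worth remarking that this elementary bound is far from optimal: the Rödl nibble gives a $t$-$(v,k,1)$-packing of size $(1-o(1))\binom{v}{t}/\binom{k}{t}$; however, the bound above is all that is needed for Proposition~\ref{prop:lower-or}.
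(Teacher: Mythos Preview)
The paper does not give its own proof of this statement: Theorem~\ref{thm:EFF} is quoted as a known result of Erd\H{o}s, Frankel and F\"uredi, cited via Du--Hwang, and is used as a black box in the derivation of Proposition~\ref{prop:lower-or}. So there is nothing to compare against.

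That said, your argument is correct and is essentially the standard proof of this bound: take a maximal packing, observe that every $k$-subset must meet some block in at least $t-1$ points, bound the number of $k$-subsets so covered by a single block by $\binom{k}{t-1}\binom{v-t+1}{k-t+1}$, and rearrange via $\binom{v}{k}\binom{k}{t-1}=\binom{v}{t-1}\binom{v-t+1}{k-t+1}$. (You have also silently fixed the paper's typo, reading ``$n$'' in the displayed bound as ``$v$''; and the ``respectively'' in your step~1 is reversed, but the intended meaning is clear.) Your closing remarks about the alteration method and the R\"odl nibble are accurate context but, as you note, unnecessary for the application.
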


With these two results, we can now prove Proposition~\ref{prop:lower-or}.

\begin{proof}[Proof of Proposition~\ref{prop:lower-or}]
    It follows from Proposition~\ref{prop:lower-or} and Theorem~\ref{thm:EFF} that
    \[ \maxcode_\orc(n,d) \ge \max \left\{\frac{\binom{n}{t-1}}{\binom{k}{t-1}^2}:~ d\ge k-s(t-1) \right\}\,.\]
    To obtain the asymptotic rates, we set $d=\ceilenv{\delta n}$, $t-1=\ceilenv{\tau n}$ and $k=\ceilenv{\kappa n}$.
    Then we have $\delta = \kappa-s\tau$ or $\tau = \frac 1s(\kappa-\delta)$. 
    Taking logarithms, we obtain Proposition~\ref{prop:lower-or}.
\end{proof}

\section{Enumerating Hyperedges in $\cG(n,d)$}
\label{app:hyperedge}

In this section, we determine the rates $\beta_i$ defined in Corollary~\ref{cor:gvrate}.
To this end, we first define the following sets:
{\small
\begin{align}
\hspace*{-2mm}\cN_2(n,d) & = \{(\bfa,\bfb)\in (\{0,1\}^n)^2 \,:\, \norm{\bfa-\bfb} < d\}, \label{eq:n2}\\
\hspace*{-2mm}\cN_3(n,d) & = \{(\bfa,\bfb,\bfc)\in (\{0,1\}^n)^3 \,:\,\norm{(\bfa \boxplus_\lambda \bfb) -\bfc} < d\}, \label{eq:n3}\\
\hspace*{-2mm}\cN_4(n,d) & = \{(\bfa,\bfb,\bfc,\bfd)\in (\{0,1\}^n)^4 \,: \notag \\ 
&\hspace{2.7cm}
\norm{(\bfa \boxplus_\lambda \bfb)-(\bfc \boxplus_\lambda \bfd)} < d\}, \label{eq:n4}
\end{align}
}
and show that the limiting rates of these sets provide sharp asymptotic estimates of $\beta_i$.

To do so, let $\cD(n,i)$ denote the set of $i$-tuples $(\bfx_1,\bfx_2,\ldots, \bfx_i)$ where $\bfx_j$'s are all pairwise distinct binary words of length $n$.
Then we set $\cN^*_i(n,d) = \cN^*_i(n,d)\cap \cD(n,i)$.

\begin{comment}
Here, we define
{\small
\begin{align}
\hspace*{-3mm}N^*_2(n,d) & = \Big|\{(\bfa,\bfb)\in\cD(n,2) \,:\, \norm{\bfa-\bfb}) < d\}\Big|, \label{eq:n2}\\
\hspace*{-3mm}N^*_3(n,d) & = \Big|\{(\bfa,\bfb,\bfc)\in \cD(n,3) \,:\,\norm{(\bfa \boxplus_\lambda \bfb) -\bfc} < d\}\Big|, \label{eq:n3}\\
\hspace*{-3mm}N^*_4(n,d) & = \Big|\{(\bfa,\bfb,\bfc,\bfd)\in \cD(n,4) \,: \notag \\ 
&\hspace{2.8cm}
\norm{(\bfa \boxplus_\lambda \bfb)-(\bfc \boxplus_\lambda \bfd)} < d\}\Big|, \label{eq:n4}
\end{align}
}    
\end{comment}

We first show that $N^*_i(n,d)$ can be used to estimate $E_i$.

\begin{lemma}\label{lem:est-1}
For $i\in\{2,3,4\}$, recall that $E_i$ is the number of hyperedges of size $i$ in $\cG(n,d)$. 
Then we have that
\[
\frac1{i!} |\cN^*_i(n,d)| \le E_i \le |\cN^*_i(n,d)|.
\]
\end{lemma}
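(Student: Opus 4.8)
The plan is a straightforward fiber-counting argument: relate ordered tuples to their underlying unordered sets. Fix $i\in\{2,3,4\}$, let $\cE_i$ be the collection of size-$i$ hyperedges of $\cG(n,d)$ (so $E_i=|\cE_i|$), and consider the map
\[
\phi\colon \cN^*_i(n,d)\longrightarrow \binom{\cV(n,d)}{i},\qquad \phi(\bfx_1,\ldots,\bfx_i)=\{\bfx_1,\ldots,\bfx_i\}.
\]
Since $\cN^*_i(n,d)\subseteq\cD(n,i)$, every tuple in the domain has $i$ pairwise distinct coordinates, so $\phi$ is well defined with image inside $\binom{\cV(n,d)}{i}$, and each fiber $\phi^{-1}(T)$ has at most $i!$ elements, one for each ordering of $T$.

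\emph{Step 1 (image of $\phi$ lies in $\cE_i$, giving $\tfrac1{i!}|\cN^*_i|\le E_i$).} For $i=2$ the defining inequality of $\cN^*_2$, namely $\norm{\bfa-\bfb}<d$, is exactly (G1). For $i=3$ (resp.\ $i=4$) the inequality $\norm{(\bfa\boxplus_\lambda\bfb)-\bfc}<d$ (resp.\ $\norm{(\bfa\boxplus_\lambda\bfb)-(\bfc\boxplus_\lambda\bfd)}<d$) used in the definition of $\cN^*_i$ is literally one of the disjuncts of (G2) (resp.\ (G3)). Hence for every $(\bfx_1,\ldots,\bfx_i)\in\cN^*_i(n,d)$ the set $\{\bfx_1,\ldots,\bfx_i\}$ satisfies the condition (Gi) and is therefore a hyperedge, i.e.\ $\mathrm{image}(\phi)\subseteq\cE_i$. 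Counting the domain by fibers,
\[
|\cN^*_i(n,d)|=\sum_{T\in\cE_i}|\phi^{-1}(T)|\le i!\,|\cE_i|=i!\,E_i .
\]

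\emph{Step 2 ($\phi$ is onto $\cE_i$, giving $E_i\le|\cN^*_i|$).} Let $T\in\cE_i$. By (Gi) some disjunct holds for a suitable labeling of the elements of $T$. Using that $\boxplus_\lambda$ is symmetric in its two arguments and that the $\ell_1$-norm is invariant under negation, each disjunct of (G2) and of (G3) is obtained from the ``canonical'' one used to define $\cN^*_i$ by permuting the roles of the elements (for $i=3$ the three disjuncts correspond to which element plays the role of $\bfc$; for $i=4$ to the three ways of partitioning $T$ into two unordered pairs). Thus we may relabel the elements of $T$ as $\bfx_1,\ldots,\bfx_i$ so that the canonical inequality holds, yielding $(\bfx_1,\ldots,\bfx_i)\in\cN^*_i(n,d)$ with $\phi(\bfx_1,\ldots,\bfx_i)=T$. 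Hence $\mathrm{image}(\phi)=\cE_i$ and $E_i=|\cE_i|\le|\cN^*_i(n,d)|$, completing the proof.

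The argument is entirely elementary; the only point I would write out with care is the bookkeeping in Step~2, matching each disjunct of (G2)/(G3) with a permutation of the canonical tuple, which is where the symmetry of $\boxplus_\lambda$ and the sign-invariance of $\norm{\cdot}$ are used. I do not anticipate any genuine obstacle beyond this.
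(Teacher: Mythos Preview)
Your proof is correct and follows essentially the same approach as the paper: both arguments set up the correspondence between ordered tuples in $\cN^*_i(n,d)$ and unordered hyperedges, use the fact that each hyperedge arises from at most $i!$ orderings to get the lower bound, and use the symmetry of the disjuncts in (G$i$) to produce at least one ordered tuple per hyperedge for the upper bound. Your write-up is slightly more explicit (naming the map $\phi$ and spelling out the role of the symmetry of $\boxplus_\lambda$ and of $\norm{\cdot}$), but the content is identical to the paper's proof.
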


\begin{proof}
We prove for the case $i=4$ and the proof for the other cases are similar.
For each edge of size four, we can reorder its vertices to form a tuple $(\bfa,\bfb,\bfc,\bfd)\in (\{0,1\}^n)^4$ such that $d(\bfa \boxplus_\lambda \bfb,\bfc \boxplus_\lambda \bfd) <d$. 
Therefore, each edge corresponds to a tuple in $\cN_4^*(n,d)$. 
This yields the upper bound.

On the other hand, for each tuple in $\cN_4^*(n,d)$, since all elements are distinct, we consider them as a 4-subset of $\{0,1\}^n$. 
Furthermore, this subset of vertices forms a hyperedge in $\cE(n,d)$. 
Now, at most $4!=24$ such tuples correspond to the same hyperedge in $\cH_4$, and so, we have the lower bound.
\end{proof}

When $i=2$, we have that $\cN_2(n,d)=\cN^*_2(n,d)\cup \{(\bfa,\bfa):\bfa\in\{0,1\}^n\}$.
So, $|\cN^*_2(n,d)|=|\cN_2(n,d)|-2^n$.
For fixed $\delta$, since $\lim_{n\to\infty}\log_2{|\cN_2(n,d)|}/n = 1+\entropy(\delta)$, 
we apply Lemma~\ref{lem:est-1} to show that $\beta_2=1+\entropy(\delta)$.  

For $i\in\{3,4\}$, determining $N^*_i$ is slightly more tedious.
Nevertheless, we proceed similarly as with $i=2$.
Specifically, we define the following sets:
\begin{align}
\cB_3(n,d) & = \{(\bfa,\bfb,\bfc)\in \cN_3(n,d) \,:\, \bfa=\bfb \}, \label{eq:n31}\\
\cC_3(n,d) & = \{(\bfa,\bfb,\bfc)\in \cN_3(n,d) \,:\, \bfa=\bfc \}, \label{eq:n32}\\
\cB_4(n,d) & = \{(\bfa,\bfb,\bfc,\bfd)\in \cN_4(n,d) \,:\, \bfa=\bfb \}, \label{eq:n41}\\
\cC_4(n,d) & = \{(\bfa,\bfb,\bfc,\bfd)\in \cN_4(n,d) \,:\, \bfa=\bfc \}. \label{eq:n42}
\end{align}

Then we have the following lemmas.

\begin{lemma}\label{lem:est-2}
We have that
{\footnotesize
\begin{align}
|\cN_3(n,d)|- |\cB_3(n,d)|-2|\cC_3(n,d)|&\le |\cN^*_3(n,d)|\le |\cN_3(n,d)|\,,  \label{eq:est-23} \\
|\cN_4(n,d)|-2|\cB_4(n,d)|-4|\cC_4(n,d)|&\le |\cN^*_4(n,d)|\le |\cN_4(n,d)|\,.  \label{eq:est-24}
\end{align}
}
\end{lemma}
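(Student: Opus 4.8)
\textbf{Proof plan for Lemma~\ref{lem:est-2}.}
The plan is to handle each of the two displayed chains by the same inclusion--exclusion bookkeeping, so I will concentrate on \eqref{eq:est-24} and note the modifications for \eqref{eq:est-23}. The upper bounds are immediate: $\cN^*_i(n,d)=\cN_i(n,d)\cap\cD(n,i)\subseteq\cN_i(n,d)$, so $|\cN^*_i(n,d)|\le|\cN_i(n,d)|$ with no work. The content is the lower bound, i.e.\ controlling how many tuples of $\cN_i(n,d)$ fail to be in $\cD(n,i)$ because two of their coordinates coincide.

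For $i=4$, first I would observe that the defining condition of $\cN_4(n,d)$, namely $\norm{(\bfa\boxplus_\lambda\bfb)-(\bfc\boxplus_\lambda\bfd)}<d$, is symmetric under the coordinate swaps $\bfa\leftrightarrow\bfb$, $\bfc\leftrightarrow\bfd$, and $(\bfa,\bfb)\leftrightarrow(\bfc,\bfd)$. Consequently the six ``collision'' sets $\{\bfx_j=\bfx_k\}\cap\cN_4(n,d)$ split under this symmetry group into two orbits: the orbit of $\{\bfa=\bfb\}$ (which also contains $\{\bfc=\bfd\}$, size $2$) and the orbit of $\{\bfa=\bfc\}$ (which contains $\{\bfa=\bfd\},\{\bfb=\bfc\},\{\bfb=\bfd\}$, size $4$). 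Each set in the first orbit has cardinality $|\cB_4(n,d)|$ and each in the second has cardinality $|\cC_4(n,d)|$, by the bijection induced by the relevant permutation of coordinates. Then the union bound
\[
|\cN_4(n,d)\setminus\cD(n,4)|\le\sum_{j<k}\bigl|\{\bfx_j=\bfx_k\}\cap\cN_4(n,d)\bigr| = 2|\cB_4(n,d)|+4|\cC_4(n,d)|
\]
gives $|\cN^*_4(n,d)|=|\cN_4(n,d)|-|\cN_4(n,d)\setminus\cD(n,4)|\ge|\cN_4(n,d)|-2|\cB_4(n,d)|-4|\cC_4(n,d)|$, which is \eqref{eq:est-24}. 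For $i=3$, the condition $\norm{(\bfa\boxplus_\lambda\bfb)-\bfc}<d$ is symmetric only under $\bfa\leftrightarrow\bfb$, so among the three collision sets $\{\bfa=\bfb\}$ sits in an orbit of size $1$ (cardinality $|\cB_3(n,d)|$) while $\{\bfa=\bfc\}$ and $\{\bfb=\bfc\}$ form an orbit of size $2$ (each of cardinality $|\cC_3(n,d)|$), and the same union bound yields \eqref{eq:est-23}.

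I do not anticipate a genuine obstacle here: the only point requiring a line of care is the claim that, say, $|\{\bfa=\bfd\}\cap\cN_4(n,d)|=|\cC_4(n,d)|$, which one justifies by exhibiting the explicit coordinate-permutation bijection and checking it preserves the $\ell_1$ constraint --- this is where the symmetry of $\boxplus_\lambda$ (it is a symmetric function of its two bit-arguments, so swapping arguments does not change the output) is used. Everything else is a routine union bound, so the write-up will be short.
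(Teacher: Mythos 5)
Your proposal is correct and takes essentially the same approach as the paper: the upper bound from $\cN^*_i\subseteq\cN_i$, and the lower bound by a union bound over the collision sets $\{\bfx_j=\bfx_k\}$, grouped into the two orbits under the symmetries of $\boxplus_\lambda$ so that two of the six sets (for $i=4$) have size $|\cB_4(n,d)|$ and the other four have size $|\cC_4(n,d)|$.
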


\begin{proof}
The inequality $|\cN^*_i(n,d)|\le |\cN_i(n,d)|$ follows from definition.
For the lower bound, as before, we prove for the case $i=4$ and the case for $i=3$ can be proved similarly.
Let $(\bfa,\bfb,\bfc,\bfd)\in\cN_4(n,d)\setminus\cN^*(n,d)$. Then we have either 
\[(\bfa=\bfb \text{ or } \bfc=\bfd )\]
or 
\[(\bfa=\bfc \text{ or } \bfa=\bfd\text{ or } \bfb=\bfc\text{ or } \bfb=\bfd).\]
In the first two sub-cases, we see from symmetry that $|\cB_4(n,d)|=|\{(\bfa,\bfb,\bfc,\bfd)\in \cN_4(n,d) \,:\, \bfa=\bfb \}|$ is equal to 
$|\{(\bfa,\bfb,\bfc,\bfd)\in \cN_4(n,d) \,:\, \bfc=\bfd \}|$.
Similarly, the next four sub-cases, we have these quantities to equal to $|\cC_4(n,d)|$.
Therefore, from union bound, we have that $|\cN_4(n,d)|\le |\cN^*_4(n,d)|+2|\cB_4(n,d)|+4|\cC_4(n,d)|$, as required.
\end{proof}

Therefore, it remains to determine the asymptotic rates of the quantities $|\cN_i(n,d)|$ for $i\in\{3,4\}$. 
To this end, we determine the recursive formulas for our quantities of interest.

Now, we assume $\lambda$ to be rational, and so, we set $\lambda = a/b$ for some integers with $a>b$. 
For $i\in\{3,4\}$, in lieu of $|\cN_i(n,d)|$, we study the quantity 
$N_i(n,d)$ which gives the total number of $4$-tuples $(\bfa,\bfb,\bfc,\bfd)$ with 
$b\norm{(\bfa \boxplus_\lambda \bfb)-(\bfc \boxplus_\lambda \bfd)}=d$.
So, we obtain $|\cN(n,d)|$ through the sum $\sum_{j=0}^{bd-1} N(n,j)$. 
Note that the argument $j$ for $N_i(n,j)$ is always an integer.
We similarly define $B_i(n,d)$ and $C_i(n,d)$.
We are now ready to state the recursive formulas. 

\begin{lemma}
If $n,d\ge 0$ and $(n,d)\ne(0,0)$, then
\begin{align}
N_3(n,d) & = 3N_3(n-1,d) +  N_3(n,d-a) \notag\\ & \hspace*{3mm}+ 3N_3(n,d-b) + N_3(n,d-a+b) ,\\
B_3(n,d) & =  B_3(n-1,d) +  B_3(n,d-a) \notag\\ & \hspace*{3mm}+  B_3(n,d-b) + B_3(n,d-a+b) ,\\
C_3(n,d) & = 2C_3(n-1,d)               \notag\\ & \hspace*{3mm}+  C_3(n,d-b) + C_3(n,d-a+b) ,\\
N_4(n,d) & = 6N_4(n-1,d) + 2N_4(n,d-a) \notag\\ & \hspace*{3mm}+ 4N_4(n,d-b) + 4N_4(n,d-a+b) ,\\
B_4(n,d) & = 2B_4(n-1,d) + 2B_4(n,d-a) \notag\\ & \hspace*{3mm}+ 2B_4(n,d-b) + 2B_4(n,d-a+b) ,\\
C_4(n,d) & = 4C_4(n-1,d) + 2C_4(n,d-a) \notag\\ & \hspace*{3mm}              + 2C_4(n,d-a+b) .
\end{align}
Here, for $i\in\{3,4\}$, the base cases are $N_i(0,0)=B_i(0,0)=C_i(0,0)=1$ and $N_i(n,d)=B_i(n,d)=C_i(n,d)=0$ if $n<0$ or $d<0$.
\end{lemma}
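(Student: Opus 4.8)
The plan is to prove all six recursions at once via a single device: peel off the last coordinate of every vector in the tuple and sum over the finitely many possibilities for those last bits. This is legitimate because the $\ell_1$-distance — hence the $b$-scaled quantity $b\norm{\cdot}$ being counted — is additive over coordinates, and because a single coordinate can contribute only one of four integer values. Indeed, for bits $u,v$ we have $u\boxplus_\lambda v=\lambda_{u+v}$ with $\lambda_0=0$, $\lambda_1=1$, $\lambda_2=a/b$, so the $b$-scaled one-coordinate contribution is $b\bigl|\lambda_{u+v}-w\bigr|$ for the $\cN_3,\cB_3,\cC_3$ families and $b\bigl|\lambda_{u+v}-\lambda_{w+x}\bigr|$ for the $\cN_4,\cB_4,\cC_4$ families, and in either case this value lies in $\{0,\,b,\,a,\,a-b\}$ (all non-negative, since $1<\lambda\le 2$ forces $0<a-b<a$). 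Once this is recognised, each recursion amounts to counting how many admissible last-bit patterns realise each of these four values: the pattern count becomes the coefficient, and the realised value becomes the amount by which $d$ is decreased while $n$ drops by one.

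First I would carry out $N_3$ in full. Split a length-$n$ triple into its length-$(n-1)$ prefix and its last bits $(a_n,b_n,c_n)\in\{0,1\}^3$; additivity gives $b\norm{(\bfa\boxplus_\lambda\bfb)-\bfc}$ as the same quantity on the prefixes plus the term $b\bigl|\lambda_{a_n+b_n}-c_n\bigr|$. Running through the eight patterns, this last term is $0$ for $(0,0,0),(0,1,1),(1,0,1)$, is $b$ for $(0,0,1),(0,1,0),(1,0,0)$, is $a$ for $(1,1,0)$, and is $a-b$ for $(1,1,1)$; collecting prefixes according to this value produces the recursion with coefficients $3,3,1,1$ on the $d$-shifts $0,b,a,a-b$ respectively. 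The argument for $N_4$ is identical except that there are sixteen last-bit patterns $(a_n,b_n,c_n,d_n)$, which I would organise by the pair $(a_n+b_n,\,c_n+d_n)\in\{0,1,2\}^2$ (each coordinate of this pair carrying multiplicities $1,2,1$); the contribution $b\bigl|\lambda_{a_n+b_n}-\lambda_{c_n+d_n}\bigr|$ then totals multiplicity $6$ at value $0$, $4$ at value $b$, $2$ at value $a$, and $4$ at value $a-b$, which is the stated recursion for $N_4$.

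For the diagonal families $\cB_i$ (where $\bfa=\bfb$) and $\cC_i$ (where $\bfa=\bfc$), nothing new happens except that the defining equality is also imposed in the last coordinate. This has two effects: the truncated prefix still satisfies the same equality, so the recursion closes within the same family; and fewer last-bit patterns are admissible, which is exactly why both the coefficients and the number of distinct shift terms shrink. For instance, in $\cC_3$ one has $c_n=a_n$, so the one-coordinate contribution is $b\bigl|\lambda_{a_n+b_n}-a_n\bigr|$, which over the four patterns $(a_n,b_n)$ is $0$ twice, $b$ once, and $a-b$ once, but never $a$ — matching the absence of a ``$d-a$'' term for $C_3$. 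The families $\cB_3,\cB_4,\cC_4$ are dispatched the same way, each an exhaustive check over at most eight last-bit patterns.

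It remains to record the boundary. For $(n,d)=(0,0)$ the only tuple is the tuple of empty words, with $b\norm{\cdot}=0$, so every count equals $1$ (the equalities $\bfa=\bfb$ and $\bfa=\bfc$ hold vacuously); for $n<0$ or $d<0$ there is nothing to count, so every count is $0$, and these conventions correctly absorb the branches in which a decrease by $a$, $b$, or $a-b$ overshoots. Since each branch strictly decreases either $n$ (by one) or $d$ (by a positive integer), the recursion is well-founded and determines all values. I do not anticipate a conceptual obstacle: the entire content is the exhaustive case analysis, and the only delicate point is the sixteen-pattern tally for $N_4$ (and, to a lesser degree, $\cB_4$ and $\cC_4$), where miscounting how many patterns yield a particular contribution would corrupt a coefficient.
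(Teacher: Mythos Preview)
Your approach is correct and is exactly the natural one: peel off the last coordinate, use additivity of the $\ell_1$-norm across coordinates, and tally the finitely many last-bit patterns by the value they contribute. The paper itself does not supply a proof of this lemma; it merely states the recursions and moves on to the generating functions, so there is nothing to compare against beyond confirming that your argument establishes what is needed.

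One point worth flagging: your derivation, by construction, decreases $n$ by one in \emph{every} term on the right-hand side (since you remove a coordinate from all vectors simultaneously). The lemma as printed has $N_3(n,d-a)$, $N_3(n,d-b)$, etc., rather than $N_3(n-1,d-a)$, $N_3(n-1,d-b)$; this is a typo in the paper, as one sees immediately from the denominators $H_i$ and $G_{(i,j)}$ quoted in Theorem~\ref{thm:lamb-GV}, all of which carry a factor of $z$ on every monomial. Your argument proves the version that the generating functions actually require. Your closing remark that ``each branch strictly decreases either $n$ (by one) or $d$'' is therefore slightly misaligned with your own derivation (in which $n$ always drops), but this does not affect correctness.
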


Using these recursive relations, we borrow tools from analytic combinatorics in several variables (ACSV) to determine the corresponding rates. Specifically, we have the following theorem from \cite{pemantle2008twenty}.
Here, we rewrite the result in the form stated in~\cite{goyal2023} (see Eq. 5).

\begin{proposition}
Consider a rational generating function $F(z,u)=\sum_{n,d\ge 0} a_{n,d}z^n u^d=P(z,u)/Q(z,u)$.
Let $\delta>0$. If $a_{n,\delta n}>0$ for all $n$, then there is a unique solution $(z_i^*,u_i^*)>0$ satisfying
\[ Q(z,u) = 0 \text{ and } u\frac{\partial}{\partial u}Q(z,u) =   \delta z\frac{\partial}{\partial z}Q(z,u)\,. \]
Furthermore, 
\[\lim_{n\to\infty} \log a_{n,\delta n}/n = - \log z^* - \delta \log u^*\,.\]
\end{proposition}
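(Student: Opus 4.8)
The statement is the classical smooth-point diagonal-asymptotics theorem of analytic combinatorics in several variables (ACSV), and the plan is to derive it from Cauchy's integral formula together with a saddle-point analysis at a minimal critical point of the singular variety $\{Q=0\}$. First I would record that, since $F=P/Q$ has nonnegative coefficients, its domain of convergence $\cD\subseteq\RR_{>0}^2$ is, in the coordinates $(\log r,\log t)$, convex and downward closed, and that along any ray in the open positive quadrant the first point of $\partial\cD$ is a genuine singularity of $F$, hence lies on $\{Q=0\}$ (a Pringsheim-type argument). Cauchy's formula gives, for $(r,t)\in\cD$, that $a_{n,d}=(2\pi i)^{-2}\oint_{|z|=r}\oint_{|u|=t}F(z,u)\,z^{-n-1}u^{-d-1}\,du\,dz$, whence $a_{n,d}\le F(r,t)\,r^{-n}t^{-d}$ by nonnegativity. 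Taking $d=\lfloor\delta n\rfloor$ and letting $(r,t)$ approach $\partial\cD$ inside the quadrant then yields $\limsup_n\tfrac1n\log a_{n,\delta n}\le\inf\{-\log r-\delta\log t:(r,t)>0,\ Q(r,t)=0\}$. Next I would check that the Lagrange conditions for minimizing $\phi(z,u)=-\log z-\delta\log u$ on $\{Q=0\}$, namely $-1/z=\mu\,\partial_zQ$ and $-\delta/u=\mu\,\partial_uQ$, eliminate $\mu$ to give exactly $u\,\partial_uQ=\delta z\,\partial_zQ$; strict convexity of the log-domain in the direction determined by $\delta$, together with $Q$ reduced so that its positive-real zero set is an analytic curve, makes the positive solution $(z^*,u^*)$ of $Q=0$ and $u\partial_uQ=\delta z\partial_zQ$ \emph{unique} and equal to the minimizer. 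Writing $\rho:=-\log z^*-\delta\log u^*$, this gives $\limsup_n\tfrac1n\log a_{n,\delta n}\le\rho$.

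\textbf{The lower bound.} The substance is to show $\liminf_n\tfrac1n\log a_{n,\delta n}\ge\rho$. I would run Cauchy's integral with radii $(z^*,u^*)$. Since $(z^*,u^*)$ is a smooth point of $\{Q=0\}$ — I would verify $\partial_uQ(z^*,u^*)\ne0$ — the equation $Q(z,u)=0$ has, for $z$ near $z^*$, a unique simple root $u=u(z)$ near $u^*$; evaluating the inner ($u$-)integral by residues reduces $a_{n,d}$ to $\tfrac1{2\pi i}\oint_{|z|=z^*}\bigl(-P(z,u(z))\big/(u(z)^{d+1}\partial_uQ(z,u(z)))\bigr)z^{-n-1}\,dz$ plus a contribution from the poles of $F(z,\cdot)$ outside $|u|=u^*$ that is exponentially smaller. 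With $d=\lfloor\delta n\rfloor$, the outer integrand has the form $g(z)\,\exp(-n\psi(z))$ with $\psi(z)=\log z+\delta\log u(z)$; implicit differentiation of $Q(z,u(z))=0$ gives $u'(z)=-\partial_zQ/\partial_uQ$, so $\psi'(z)=0$ is equivalent to $u\,\partial_uQ=\delta z\,\partial_zQ$, i.e.\ $\psi$ has its saddle precisely at $z=z^*$, where $\psi(z^*)=-\rho$. A contour deformation through $z^*$ and the quadratic (Laplace) estimate, using $\psi''(z^*)\ne0$ — the strict curvature of $\partial\cD$ at the minimizer — then give $a_{n,\delta n}=C\,n^{-1/2}(z^*)^{-n}(u^*)^{-\delta n}\,(1+o(1))$ for a constant $C$; since every coefficient is nonnegative there is no cancellation, so $C>0$, and taking logarithms yields $\lim_n\tfrac1n\log a_{n,\delta n}=-\log z^*-\delta\log u^*$.

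\textbf{The main obstacle.} The delicate part is the lower bound, and inside it the structural input that $(z^*,u^*)$ is a \emph{minimal} smooth critical point: that $Q$ is zero-free on the open polydisk $\{|z|<z^*,\,|u|<u^*\}$, that $\partial_uQ(z^*,u^*)\ne0$, and that $\psi''(z^*)\ne0$. In the stated generality these rest on nonnegativity of the $a_{n,d}$ (placing the dominant singularity on the positive-real locus, again via Pringsheim) together with mild nondegeneracy of $Q$ (irreducibility, simple vanishing on the relevant branch), which is precisely the role of the hypothesis ``$a_{n,\delta n}>0$ for all $n$''. For the rational functions $N_i,B_i,C_i$ produced by the recurrences above these hypotheses are straightforward to verify directly, so in practice I would simply invoke the theorem as stated in \cite{pemantle2008twenty} (in the reformulation of \cite{goyal2023}); the residue evaluation of the inner integral, the tail estimate, and the saddle-point estimate of the outer integral are by now routine in ACSV.
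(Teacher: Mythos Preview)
The paper does not prove this proposition at all: it is quoted as a known result from \cite{pemantle2008twenty}, in the reformulation of \cite{goyal2023}, and is simply invoked as a black box. Your sketch is the standard Pemantle--Wilson smooth-point argument (Cauchy's formula, Pringsheim-type localization to the positive reals via nonnegativity, residue in the inner variable, saddle-point in the outer), which is precisely what those references contain, so there is nothing in the paper to compare against; your outline is correct and matches the cited literature. One useful thing your write-up adds is that it makes explicit the structural hypotheses actually needed---minimality of $(z^*,u^*)$, smoothness ($\partial_uQ\ne0$), and nondegeneracy of $\psi''$---which the paper's informal hypothesis ``$a_{n,\delta n}>0$ for all $n$'' only gestures at; as you note, for the specific denominators $H_i,G_{(i,j)}$ arising here these are routine to check.
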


Therefore, it suffices to determine the denominators of the corresponding generating functions for $N_i(n,d)$, $B_i(n,d)$ and $C_i(n,d)$. Using standard methods in combinatorics (see for example~\cite{pemantle2008twenty}), we see that the denominators corresponding to $N_i(n,d)$, $B_i(n,d)$, $C_i(n,d)$ are the bivariate polynomials $H_i(z,u)$, $G_{(i,1)}(z,u)$, and $G_{(i,2)}(z,u)$, respectively defined in Theorem~\ref{thm:lamb-GV}.

\end{document}